\def\papertitle{Untangling Phase and Time in Monophonic Sounds}
\def\paperauthorA{Henning Thielemann}
\title{\papertitle}
\affiliation{\paperauthorA}
{\href{http://users.informatik.uni-halle.de/~thielema/ResearchE.html}{Institut f\"ur Informatik} \\ Martin-Luther-Universit\"at Halle-Wittenberg \\ Halle, Germany\\
{\tt \href{mailto:henning.thielemann@informatik.uni-halle.de}{henning.thielemann@informatik.uni-halle.de}}
}
\newcommand\code[1]{\texttt{#1}}
\newcommand\term[1]{\textit{#1}}
\newcommand\haskell{\texttt{Haskell}}
\newcommand\periodicspace{\nicefrac{\R}{\Z}}
\newcommand\V{V} 
\newcommand\screwl[2]{#2 \boldsymbol\leftarrow #1}
\newcommand\screwr[2]{#2 \boldsymbol\rightarrow #1}
\newcommand\discretise[1]{Q{#1}}
\DeclareMathOperator{\sincone}{sinc1}
\DeclareMathOperator{\cisone}{cis1}
\DeclareMathOperator{\lerp}{lerp}
\DeclareMathOperator{\fractional}{frac}
\DeclareMathOperator{\round}{round}
\DeclareMathOperator{\toperiodic}{c}
\newcommand\fromperiodic{\toperiodic^{-1}}
\newcommand\figcaption[2]{\caption{\textit{#2}}\figlabel{#1}}
\newcommand\vcentergraphics[1]{
\begin{tabular}[c]{r}
\hspace{-5ex}
\includegraphics[width=0.9\columnwidth]{#1}
\end{tabular}
}
\begin{document}
\ifpdf 
  \DeclareGraphicsExtensions{.png,.jpg,.pdf}
\else  
  \DeclareGraphicsExtensions{.eps}
\fi

\maketitle

\begin{abstract}
We are looking for a mathematical model of monophonic sounds
with independent time and phase dimensions.
With such a model we can resynthesise a sound
with arbitrarily modulated frequency and progress of the timbre.
We propose such a model and show
that it exactly fulfils some natural properties,
like a kind of time-invariance,
robustness against non-harmonic frequencies,
envelope preservation,
and inclusion of plain resampling as a special case.
The resulting algorithm is efficient and
allows to process data in a streaming manner
with phase and shape modulation at sample rate,
what we demonstrate with an implementation
in the functional language \haskell{}.
It allows a wide range of applications,
namely pitch shifting and time scaling,
creative FM synthesis effects,
compression of monophonic sounds,
generating loops for sampled sounds,
synthesise sounds similar to wavetable synthesis,
or making ultrasound audible.
\end{abstract}

\section{Introduction}

An example of our problem is illustrated in \figref{goal}.
Given is a signal of a monophonic sound of a known constant pitch.
We want to alter its pitch and the progression of its waveshape 
independently, possibly time-dependent, possibly rapidly.
The sound must not contain noise portions such as speech does.
We also do not try to preserve formants,
that is, like in resampling,
we accept that the spectrum of harmonics is stretched
by the same factor as the base frequency.
E.g. a square waveform shall remain square and so on.
For some natural instruments this is appropriate (e.g. guitar, piano)
whereas for other natural sounds this is inappropriate (e.g. speech).

The organisation of this article is inspired by
\cite{peyton-jones2004paper}.
With the paper we like to contribute the following:
\begin{itemize}
\item In \secref{problem} we specify our problem.
In \secref{cylinder-model}
we propose a mathematical model for monophonic sounds
given as real functions.
This model untangles phase and time
and allows us to describe frequency modulation and waveshape control.
In \secref{principle} we show how we utilise this model
for phase and time modification
and we formulate natural properties of this process.
\item \secref{cont-signal-theory}
is dedicated to theoretical details.
To this end we introduce some notations and definitions
in \secref{notation} and \secref{basic-functions}.
We investigate the properties from \secref{properties} like
time-invariance (\secref{time-invariance}),
linearity (\secref{linearity}),
preservation of static waves of the unit frequency (\secref{static-wave-preservation}),
preservation of pure sine waves and
robustness against non-harmonic frequencies (\secref{frequency-map}),
envelope preservation (\secref{envelope-preservation}),
inclusion of simple resampling and time warping
as a special case (\secref{resampling}),
and we prove that our model satisfies these properties exactly.
That is, our method is altogether theo\-re\-ti\-cally sound.
(I could not resist that pun!)
As bonus we verified some of the statements
using the proof assistant PVS in \secref{pvs-proofs}.
\item The problems of handling discrete signals
are treated in \secref{discrete-signal},
including notes on the implementation
in the purely functional programming language \haskell{}.
\item We suggest a range of applications of our method in \secref{application}.
\item In \secref{related-work} you find a survey of related work
and in \secref{example} we compare some results of our method
with the ones produced by the similar wavetable synthesis.
\item We finish our paper in \secref{outlook}
with a list of issues that we still need to work on.
\end{itemize}

\begin{figure}
\begin{tabular}{cc}
$x(t)$ & \vcentergraphics{trapezoid07} \\
$z(t)$ & \vcentergraphics{trapezoid10} \\
 & time $t$
\end{tabular}
\figcaption{goal}{
A typical use case of our method:
From the above signal of a single tone
we want to compute the signal below.
That is, we want to alter the pitch
while maintaining the progression of its waveshape
and without knowing, how the signal was generated.
}
\end{figure}

\section{Continuous Signals: Overview}
\seclabel{cont-signal}

\subsection{Problem}
\seclabel{problem}

If we want to transpose a monophonic sound,
we could just play it faster for higher pitch
or slower for lower pitch.
This is how resampling works.
But this way the sound becomes also shorter or longer.
For some instruments like guitars this is natural,
but for other sounds like that of a brass, it is not necessarily so.
The problem we face is, that with ongoing time
both the waveform and the phase within the waveform change.
Thus we can hardly say, what the waveshape at a precise time point is.

If we could untangle phase and shape
this would open a wide range of applications.
We could independently control
progress of phase (i.e. frequency)
and progress of the waveshape.

\subsection{Model}
\seclabel{cylinder-model}

The wish for untangled phase and shape leads us straight forward
to the model we want to propose here.
If phase and shape shall be independent variables of a signal,
then our signal is actually a two-dimensional function,
mapping from phase and shape to the (particle) displacement.
Since the phase $\varphi$ is a cyclic quantity,
the domain of the signal function is actually a cylinder.
For simplicity we will identify the time point $t$ in a signal
with the shape parameter.
That is, in our model the time points to the instantaneous shape.

However, we never get signals in terms of a function on a cylinder.
So, how is this model related to real-word one-dimensional audio signals?
According to \figref{cylinder}
the easy direction is to get from the cylinder to the plain audio signal:
We move along the cylinder while increasing both the phase and shape parameter
proportionally to the time in the audio signal.
This yields a helical path.
The phase to time ratio is the frequency,
the shape to time ratio is the speed of shape progression.
The higher the ratio of frequency to shape progression,
the more dense the helix.
For constant ratio the frequency is proportional to the speed
with which we go along the helix.
We can change phase and shape non-proportionally to the time,
yielding non-helical paths.

When going from the one-dimensional signal to the two-dimensional signal,
there is a lot of freedom of interpretation.
We will use this freedom to make the theory as simple as possible.
E.g. we will assume, that the one-dimensional input signal
is an observation of the cylindrical function at a helical path.
Since we have no data for the function values beside the helix,
we have to guess them, in other words, we will interpolate.
\begin{figure}
\includegraphics[width=\columnwidth]{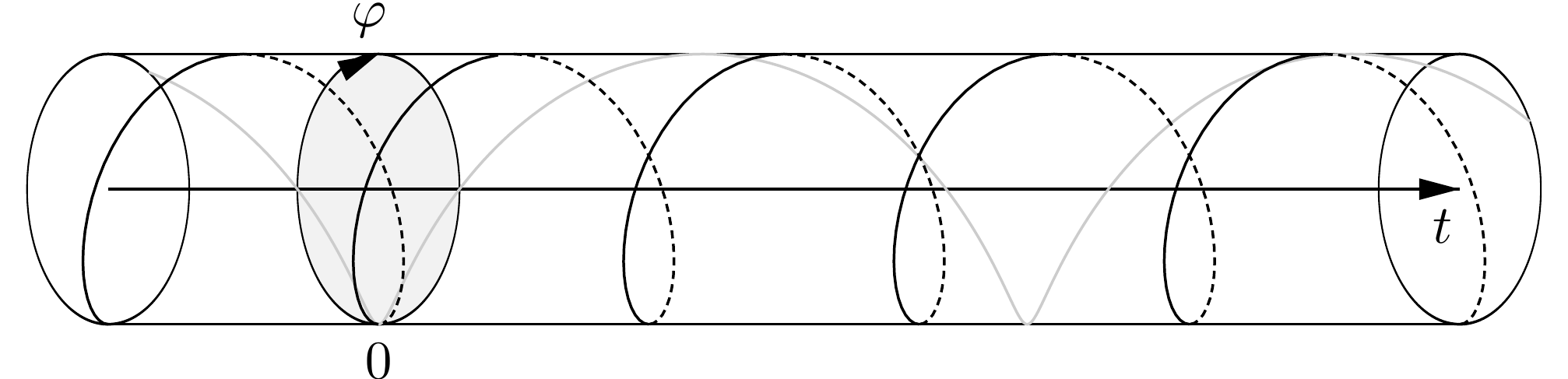}
\figcaption{cylinder}{
The cylinder we map the input signal onto (black and dashed helix)
and where we sample the output signal from (grey).
}
\end{figure}

This is actually a nice model
that allows us to perform many operations in an intuitive way
and thus it might be of interest beyond
pitch shifting and time scaling.

\subsection{Interpolation principle}
\seclabel{principle}

\newcommand{\tocyl}{F}
\newcommand{\fromcyl}{S}
\newcommand{\viacyl}{M}

An application of our model will
firstly cover the cylinder with data
that is interpolated from a one-dimensional signal $x$ by an operator $F$
and secondly it will choose some data
along a curve around that cylinder by an operator $S$.
The operator that we will work with here
has the structure
\[
\tocyl x(t,\varphi) = \sum_{k\in\Z} x(\varphi+k)\cdot\kappa(t-\varphi-k)
\]
where $\kappa$ is an interpolation kernel
such as a hat function or a sinus cardinalis ($\sinc$).
Intuitively spoken, it lays the signal on a helix on the cylinder.
Then on each line parallel to the time axis
there are equidistant discrete data points.
Now, $F$ interpolates them along the time direction
using the interpolation kernel $\kappa$.
You may check that $Fx(t,\varphi)$ has period 1 with respect to $\varphi$.
This is our way to represent the radian coordinate of the cylinder
within this section.

The observation operator $S$ shall sample along a helix
with time progression $v$ and angular speed $\alpha$:
\[
\fromcyl y (t) = y(v\cdot t, \alpha\cdot t)
\quad.
\]

Interpolation and observation together, yield
\begin{IEEEeqnarray*}{r/C/l}
\viacyl x (t)
 &=& \fromcyl (\tocyl x) (t) \\
 &=& \sum_{k\in\Z} x(\alpha\cdot t+k)\cdot\kappa((v-\alpha)\cdot t-k)
\quad.
\end{IEEEeqnarray*}

\seclabel{properties}

This operator turns out to have some useful properties:
\begin{enumerate}
\item Time-invariance \\
\label{item:time-invariance}
In audio signals often the absolute time is not important,
but the time differences.
Where you start an audio recording
should not have substantial effects on an operation you apply to it.
This is equivalent to the statement,
that a delay of the signal shall be mapped to a delayed result signal.
In particular it would be nice to have the property,
that a delay of the input by $v\cdot t$ yields a delay by $t$ of the output.
However this will not work.
To this end consider pure time-stretching ($\alpha=1$) applied to grains,
and we become aware that this property implies plain resampling,
which clearly changes the pitch.
What we have at least, is a restricted time invariance:
You have a discrete set of pairs of delays of input and output signal
that are mapped to each other
wherever the helices in \figref{cylinder} cross,
that is wherever $(v-\alpha)\cdot t \in\Z$.

However the construction $F$ of our \emph{model} is time invariant
in the sense
\begin{IEEEeqnarray*}{r/C/l}
x_1(t) &=& x_0(t-\tau) \\
\implies
\tocyl x_1(t,\varphi)
 &=& \tocyl x_0(t-\tau,\varphi-\tau)
\quad.
\eqnlabel{time-invariance-pointwise}
\end{IEEEeqnarray*}

\item Linearity \\
Since both $\tocyl$ and $\fromcyl$ are linear,
our phase and time modification process is linear as well.
This means that physical units and overall magnitudes
of signal values are irrelevant (\term{homogeneity})
and mixing before interpolation
is equivalent to mixing after interpolation (\term{additivity}).
\begin{IEEEeqnarray*}{s"r/C/l}
Homogeneity & \viacyl(\lambda\cdot x) &=& \lambda\cdot \viacyl x
   \eqnlabel{homogeneity} \\
Additivity & \viacyl(x+z) &=& \viacyl x + \viacyl z
   \eqnlabel{additivity}
\end{IEEEeqnarray*}

\item Resampling as special case \\
\label{item:resampling}
We think, that pitch shifting and time scaling by factor~1
should leave the input signal unchanged.
We also think, that resampling is the most natural answer
to pitch shifting and time scaling by the same factor $\alpha=v$.
For interpolating kernels,
that is $\kappa(0)=1, \forall j\in\Z\setminus\{0\}: \kappa(j) = 0$,
this actually holds.
\[
\viacyl x(t) = x(v\cdot t)
\]

\item Mapping of sine waves \\
\label{item:mapping-sines}
Our phase and time manipulation method
maps sine waves to sine waves
if the kernel is the sinus cardinalis normalised to integral zeros.
\[
\kappa(t) =
  \begin{cases}
      1&: t=0 \\
      \frac{\sin (t\cdot\pi)}{t\cdot\pi} &: \text{otherwise}
  \end{cases}
\]
Choosing this kernel means \person{Whittaker} interpolation.
Now we consider a complex wave of frequency $a$
as input for the phase and time modification.
\begin{IEEEeqnarray*}{r/C/l}
x(t) &=& \exp(2\pi\imag\cdot a \cdot t) \\
a &=& b+n \eqnlabel{decompose-frequency} \\
n &\in& \Z \\
b &\in& (-\tfrac{1}{2},\tfrac{1}{2}) \\
\viacyl x(t) &=& \exp(2\pi\imag\cdot (b\cdot v + n\cdot\alpha) \cdot t)
   \eqnlabel{frequency-mapping}
\end{IEEEeqnarray*}
Note that for $\fractional{a} = \frac{1}{2}$,
the \person{Whittaker} interpolation will diverge.
If $b=0$, that is the input frequency $a$ is integral,
then the time progression has no influence on the frequency mapping,
i.e.\ the input frequency $a$ is mapped to $\alpha\cdot a$.
We should try to fit the input signal as good as possible
to base frequency $1$ by stretching or shrinking,
since then all harmonics have integral frequency.

The fact, that sine waves are mapped to sine waves, implies,
that the effect of $\viacyl$ to a more complex tone
can be described entirely in frequency domain.
An example of a pure pitch shift is depicted in \figref{spectrum-shifted}.
The peaks correspond to the harmonics of the sound.
We see that the peaks are only shifted.
That is, the shape and width of each peak is maintained,
meaning that the envelope of each harmonic is the same after pitch shifting.
\begin{figure}
\begin{tabular}{l}
\includegraphics[width=\columnwidth]{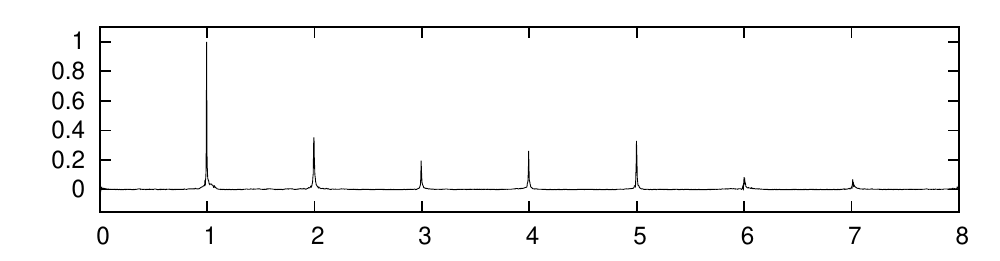} \\
\includegraphics[width=\columnwidth]{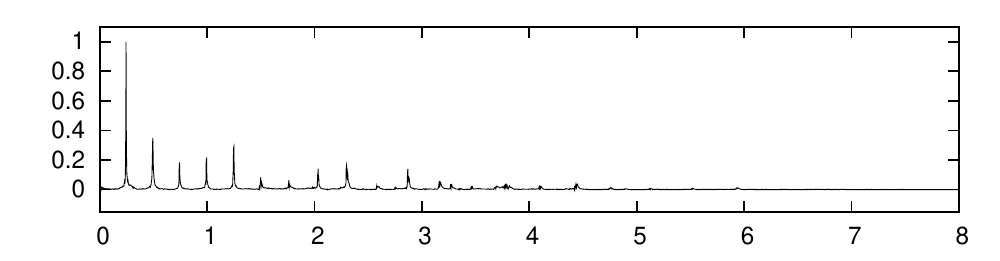}
\end{tabular}
\figcaption{spectrum-shifted}{
The first graph presents the lower part
of the absolute spectrum of a piano sound.
Its pitch is shifted 2 octaves down (factor~4) in the second graph.
}
\end{figure}

\item Preservation of envelope \\
Consider a static wave $x$,
i.e. $\forall t\ x(t)=x(t+1)$,
that is amplified according to an envelope $f$.
If interpolation with $\kappa$
is able to reconstruct $f$ and all of its translates
from their respective integral values,
then on the cylinder wave and envelope become separated
\[
\tocyl x(t,\varphi) = f(t)\cdot x(\varphi)
\]
and the overall phase and time manipulation algorithm
modifies frequency and time separately:
\[
\viacyl x(t) = f(v\cdot t)\cdot x(\alpha\cdot t)
\quad.
\]

Examples for $\kappa$ and $f$ are:
\begin{itemize}
\item
$\kappa$ being the sinus cardinalis
as defined in item \ref{item:mapping-sines} and
$f$ being a signal bandlimited to $(-\frac{1}{2},\frac{1}{2})$,
\item $\kappa = \charfunc{(-1,0]}$ and $f$ being constant,
\item $\kappa(t) = \max(0,1-\abs{t})$ and $f$ being a linear function,
\item $\kappa$ being an interpolation kernel,
that preserves polynomial functions up to degree~$n$ and
$f$ being such a polynomial function.
\end{itemize}

\end{enumerate}

\section{Continuous Signals: Theory}
\seclabel{cont-signal-theory}


In this section we want to give proofs of the statements
found in \secref{cont-signal}
and we want to check what we could have done alternatively
given the properties that we found to be useful.
You can safely skip the entire section
if you are only interested in practical results and applications.

\subsection{Notation}
\seclabel{notation}

In order to give precise, concise, even intuitive proofs,
we want to introduce some notations.

In signal processing literature
we find often a term like $x(t)$ being called a signal,
although from the context you derive,
that actually $x$ is the signal and
thus $x(t)$ denotes a displacement value of that signal at time $t$.
We like to be more strict in our paper.
We like to talk about signals as objects
without always going down to the level of single signal values.
Our notation should reflect this
and should clearly differentiate between signals and signal values.
This way, we can e.g. express a statement like
``delay and convolution commute'' by
\[
\translater{t}{(x*y)} = x*(\translater{t}{y})
\]
(cf. \eqnref{convolution-translation-associative})
which would be more difficult in a pointwise and correct~(!) notation.

This notation is inspired by functional programming,
where functions that process functions are called higher-order functions.
It allows us to translate the theory described here
almost literally to functional programs and theorem prover modules.
Actually some of the theorems stated in this paper
have been verified using PVS \cite{owre2001pvs}.
For a more detailed discussion of the notation,
see \cite{thielemann2006matchedwavelets}.

In our notation function application
has always higher precedence than infix operators.
Thus $\translater{t}{\discretise{x}}$ means
$\translater{t}{(\discretise{x})}$
and not $\discretise{(\translater{t}{x})}$.
Function application is left associative,
that is, $\discretise{x}(t)$ means $(\discretise{x})(t)$
and not $\discretise{(x(t))}$.
This is also the convention in Functional Analysis.
We use anonymous functions, also known as lambda expressions.
The expression $\anonymfunc{x}{Y}$
denotes a function $f$ where $\forall x\ f(x) = Y$ and
$Y$ is an expression that usually contains $x$.
Arithmetic infix operators like ``$+$'' and ``$\cdot$''
shall have higher precedence than the mapping arrow,
and logical infix operators like ``$=$'' and ``$\land$''
shall have lower precedence.
That is, $\anonymfunc{t}{f(t-\tau)} = \translater{\tau}{f}$
means $(\anonymfunc{t}{(f(t-\tau)+g(t-\tau))}) = (\translater{\tau}{(f+g)})$.

\begin{definition}[Function set]
With \[\funcset{A}{B}\] we like to denote the set of all functions
mapping from set~$A$ to set~$B$.
This operation is treated right associative,
that is, $\funcset{A}{\funcset{B}{C}}$
means $\funcset{A}{(\funcset{B}{C})}$,
not $\funcset{(\funcset{A}{B})}{C}$.
This convention matches the convention of left associative function application.
\end{definition}

\subsection{Basic functions}
\seclabel{basic-functions}

For the description of the cylinder
we first need the notion of a cyclic quantity.
\begin{definition}[Cyclic quantity]
Intuitively spoken,
cyclic (or periodic) quantities are values in the range $[0,1)$
that wrap around at the boundaries.
More precisely, a cyclic quantity $\varphi$ is a set of real numbers
that all have the same fractional part.
Put differently, a periodic quantity is an equivalence class
with respect to the relation,
that two numbers are considered equivalent
when their difference is integral.
In terms of a quotient space this can concisely be written as
\[ \varphi\in\periodicspace \quad.\]
\end{definition}

\begin{definition}[Periodisation]
Periodisation $\toperiodic$ means mapping a real value to a cyclic quantity,
i.e. choosing the equivalence class belonging to a representative.
\begin{eqnarray*}
\toperiodic &\in& \funcset{\R}{\periodicspace} \\
\forall p\in\R \quad \toperiodic(p)
   &=& p+\Z \\
   &=& \{q : q-p \in\Z \}
\end{eqnarray*}
It holds $\toperiodic(0)=\Z$.
We define the inverse of $\toperiodic$
as picking a representative from the range $[0,1)$.%
\begin{eqnarray*}
\fromperiodic &\in& \funcset{\periodicspace}{\R} \\
\forall \varphi\in\periodicspace \quad
   \fromperiodic(\varphi) &\in& \setsection{\varphi}{[0,1)}
\end{eqnarray*}
\end{definition}
In a computer program,
we do not encode the elements of $\periodicspace$ by sets of numbers,
but instead we store a representative between 0 and 1,
including 0 and excluding 1.
Then $\toperiodic$ is just the function, that computes the fractional part,
i.e.~\texttt{c~t~= t - floor t}.

A function $y$ on the cylinder is thus from
$\funcset{(\R\times\periodicspace)}{\V}$,
where $\V$ denotes a vector space.
E.g. for $\V=\R$ we have a mono signal,
for $\V=\R\times\R$ we obtain a stereo signal and so on.

The conversion $S$ from the cylinder to an audio signal
is entirely determined by given
phase control curve $g$ and shape control curve $h$.
It consists of picking the values from the cylinder along the path
that corresponds to these control curves.
\begin{eqnarray}
S_{h,g} &\in&
 \funcset{(\funcset{(\R\times\periodicspace)}{\V})}{(\funcset{\R}{\V})} \\
S_{h,g}y(t) &=& y(h(t),g(t)) \eqnlabel{cylinder-to-audio}
\end{eqnarray}

For the conversion $F$ from a prototype audio signal to a cylindrical model
we have a lot of freedom.
In section \secref{principle} we have seen what properties
a certain $F$ has, that we use in our implementation.
We will going on to check what choices for $F$ we have,
given that these properties hold.
For now we will just record, that
\begin{eqnarray*}
F &\in& \funcset{(\funcset{\R}{\V})}{(\funcset{(\R\times\periodicspace)}{\V})}
\quad.
\end{eqnarray*}

\subsection{Properties}


\subsubsection{Time-Invariance}
\seclabel{time-invariance}

\begin{definition}[Translation, Rotation]
Shifting a signal $x$ forward or backward in time
or rotating a waveform with respect to its phase
shall be expressed by an intuitive arrow notation
that is inspired by \cite{strang1996cascade,sweldens1998liftingfactor}
and was already successfully applied in \cite{thielemann2006matchedwavelets}:
\begin{eqnarray}
(\translater{\tau}{x})(t) &=& x(t-\tau) \eqnlabel{translate-right} \\
(\translatel{\tau}{x})(t) &=& x(t+\tau) \eqnlabel{translate-left}
\quad.
\end{eqnarray}
For a cylindrical function we have two directions,
one for rotation and one for translation.
We define analogously
\begin{eqnarray}
(\screwr{(\tau,\alpha)}{y})(t, \varphi) &=& y(t-\tau, \varphi-\alpha) \eqnlabel{screw-right} \\
(\screwl{(\tau,\alpha)}{y})(t, \varphi) &=& y(t+\tau, \varphi+\alpha) \eqnlabel{screw-left}
\quad.
\end{eqnarray}
\end{definition}

The first notion of time-invariance that comes to mind,
can be easily expressed using the arrow notation by
$\forall t\ F(\translater{t}{x}) = \translater{(t,c(0))}{Fx}$.
However, this will not yield any useful conversion.
Shifting the time always includes shifting the phase
and our notion of time-invariance must respect that.
We have already given an according definition in
\eqnref{time-invariance-pointwise}
that we can now write using the arrow notation.
\begin{definition}[Time-invariant cylinder interpolation]
\dfnlabel{time-invariance}
We call an interpolation operator $F$
time-invariant whenever it satisfies
\begin{eqnarray}
\forall x\ \forall t & &
   F(\translater{t}{x}) = \screwr{(t,\toperiodic(t))}{Fx}
\quad.
\eqnlabel{time-invariance}
\end{eqnarray}
\end{definition}
Using this definition,
we do not only force $F$ to map translations to translations,
but we also fix the factor of the translation distance to $1$.
That is, when shifting an input signal $x$,
the according model $Fx$ is shifted along the unit helix,
that turns once per time difference 1.

Enforcing the time-invariance property
restricts our choice of $F$ considerably.
\begin{IEEEeqnarray*}{r/C/l"l}
\separateeqn{Fx(t, \varphi)}
  &=& (\screwl{(t,\toperiodic(t))}{Fx})(0, \varphi-\toperiodic(t))
         & \eqnremark{\eqnref{screw-left}} \\
  &=& F(\translatel{t}{x})(0, \varphi-\toperiodic(t))
         & \eqnremark{\eqnref{time-invariance}}
\end{IEEEeqnarray*}
We see, that actually only a ring slice of $F(\translatel{t}{x})$
at time point zero is required
and we can substitute
$Ix(\varphi) = Fx(0,\varphi)$.
$I$ is an operator
from $\funcset{(\funcset{\R}{\V})}{(\funcset{\periodicspace}{\V})}$,
that turns a straight signal into a waveform.
Now we know, that time-invariant interpolations can only be of the form
\begin{IEEEeqnarray*}{r/C/l}
Fx(t,\varphi) &=& I(\translatel{t}{x})(\varphi-\toperiodic(t))
\eqnlabel{time-invariant-ip-pointwise} \\
\eqntext{or more concisely}
\anonymfunc{\varphi}{Fx(t,\varphi)}
 &=& \translater{\toperiodic(t)}{I(\translatel{t}{x})}
\eqnlabel{time-invariant-ip}
\quad.
\end{IEEEeqnarray*}
The last line can be read as:
In order to obtain a ring slice of the cylindrical model at time $t$,
we have to move the signal, such that time point $t$ becomes point $0$,
then apply $I$ to get a waveform on a ring,
then rotate back that ring correspondingly.

We may check, that any $F$ defined this way is indeed time-invariant
in the sense of \eqnref{time-invariance}.
\begin{IEEEeqnarray*}{r/C/l"l}
\separateeqn{F(\translater{t}{x})(\tau,\varphi)}
 &=& I(\translatel{\tau}{(\translater{t}{x})})(\varphi-\toperiodic(\tau))
         & \eqnremark{\eqnref{time-invariant-ip-pointwise}} \\
 &=& I(\translatel{(\tau-t)}{x})(\varphi-\toperiodic(\tau)) \\
 &=& I(\translatel{(\tau-t)}{x})(\varphi-\toperiodic(t)-\toperiodic(\tau-t)) \\
 &=& Fx(\tau-t,\varphi-\toperiodic(t))
         & \eqnremark{\eqnref{time-invariant-ip-pointwise}} \\
 &=& (\screwr{(t,\toperiodic(t))}{Fx})(\tau,\varphi)
\end{IEEEeqnarray*}

\subsubsection{Linearity}
\seclabel{linearity}

We like that our phase and time modification process is linear
(as in \eqnref{homogeneity} and \eqnref{additivity}).
Since sampling $S$ from the cylinder is linear,
the interpolation $F$ to the cylinder must be linear as well.
\begin{IEEEeqnarray*}{s"r/C/l}
Homogeneity & F(\lambda\cdot x) &=& \lambda\cdot Fx \\
Additivity & F(x+z) &=& Fx + Fz
\end{IEEEeqnarray*}
The properties of $F$ are equivalent to
\begin{eqnarray*}
I(\lambda\cdot x) &=& \lambda\cdot Ix \\
I(x+z) &=& Ix + Iz
\quad.
\end{eqnarray*}

\subsubsection{Static wave preservation}
\seclabel{static-wave-preservation}

Another natural property is,
that an input signal consisting of a wave of constant shape
is mapped to the cylinder where each ring contains that waveform.
A static waveform can be written concisely as $\compose{w}{\toperiodic}$.
It denotes the function composition of $w$ and $\toperiodic$,
that is, $w$ is applied to the result of $\toperiodic$,
for example $(\compose{w}{\toperiodic})(2.3) = w(c(0.3))$.
Thus $w$ and $\compose{w}{\toperiodic}$ both represent periodic functions,
but $w$ has domain $\periodicspace$ and thus is periodic by its type,
whereas $\compose{w}{\toperiodic}$ is an ordinary real function,
that happens to satisfy the periodicity property
$(\compose{w}{\toperiodic}) = \translater{1}{(\compose{w}{\toperiodic})}$.
%
%
We can write our requirement as
\[
\forall t\ \forall \varphi\quad
F(\compose{w}{\toperiodic})(t,\varphi) = w(\varphi)
\quad.
\]

As an example we have a constant interpolation
\begin{IEEEeqnarray*}{r/C/l}
I(x) &=& \compose{x}{\fromperiodic} \\
Fx(t,\varphi) &=& x\left(t+\fromperiodic(\varphi-\toperiodic(t))\right)
\quad.
\end{IEEEeqnarray*}
We illustrate the constant interpolation in \figref{constant-interpolation},
but with a sine wave, that does not have frequency 1,
and thus looks for the interpolation operator $F$ like a non-static waveform.
This way, we can better demonstrate how constant interpolation works,
and we think one can verify intuitively, how it preserves static waves.
\begin{figure}
\includegraphics[width=\columnwidth]{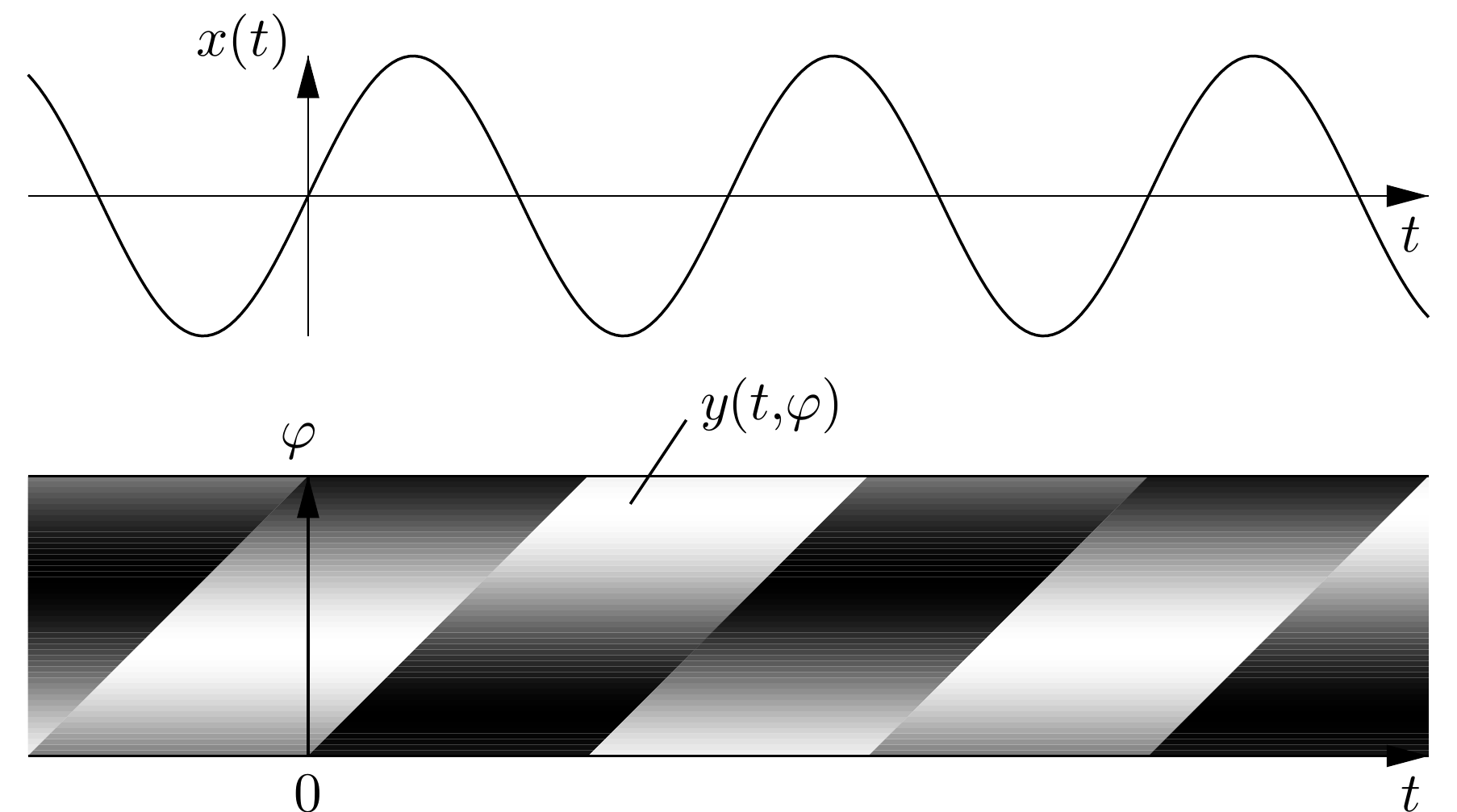}
\figcaption{constant-interpolation}{
Constant interpolation (below) of a sine wave (above) that is out of sync.
The interpolation picture represents the surface of the cylinder
after cutting and flattening.
A black dot means $y(t,\varphi) = -1$ and a white dot represents $1$.
The sine wave can be found in the interpolation image
at the right border of each of the skew stripes.
Along the vertical line from bottom to top
you find the first period of the input signal,
where ``first'' is measured from time point $0$.
}
\end{figure}

We can consider an input signal of the form $\compose{w}{\toperiodic}$
as a wave with constant envelope
and we will generalise this to other envelopes in
\secref{envelope-preservation}.

\subsubsection{Mapping of pure sine waves}
\seclabel{frequency-map}

We like to derive, how frequencies are mapped
when converting from an audio signal to the cylindrical model
and observing the signal along a different but uniform helix.
To this end, we need an interpolation that maps sine waves to sine waves.
Actually, the \person{Whittaker} interpolation has this property.
\begin{IEEEeqnarray*}{r/C/l}
\sincone t
 &=& \lim_{\tau\to t} \frac{\sin (\tau\cdot\pi)}{\tau\cdot\pi} \\
 &=& \begin{cases}
         1&: t=0 \\
         \frac{\sin (t\cdot\pi)}{t\cdot\pi} &:\text{otherwise}
     \end{cases}
\\
Fx(t,\varphi)
   &=& \sum_{\tau\in\varphi} x(\tau)\cdot\sincone(t-\tau)
\eqnlabel{whittaker-interpolation}
\end{IEEEeqnarray*}
Since $\varphi\in\periodicspace$,
when $\tau\in\varphi$
then $\tau$ assumes all values that differ from $\fromperiodic(\varphi)$ by an integer.
The infinite sum $\sum_{\tau\in\varphi} f(\tau)$
shall be understood as
$\lim_{n\to\infty}\sum_{\tau\in\setsection{\varphi}{[-n,n]}} f(\tau)$.

The proof of $F$ being time-invariant according to \dfnref{time-invariance}
is deferred to \secref{kernel-interpolation-time-invariant},
where we perform the proof for any interpolating kernel,
not just $\sincone$.

We will now demonstrate, that $\sincone$-interpolation preserves sine waves
and how frequencies are mapped.

\paragraph*{Mapping a complex sine wave to the cylinder}

Since exponential laws are much easier to cope with
than addition theorems for sine and cosine,
we use a complex wave defined by
\begin{eqnarray*}
\cisone t &=& \exp(2\pi\imag\cdot t)
\quad.
\end{eqnarray*}
For the following derivation we need the
\person{Whittaker}-\person{Shannon} interpolation formula
\cite{hamming1989digitalfilter} in the form
\begin{multline}
\forall b \in (-\tfrac{1}{2},\tfrac{1}{2}) \\
\sum_{k\in\Z} \cisone(b\cdot k)\cdot\sincone(t-k)
 = \cisone(b\cdot t)
.
\eqnlabel{whittaker-interpolation-of-sine}
\end{multline}
We choose a complex wave of frequency $a$
as input for the conversion to the cylinder.
The fractional frequency part $b$ and the integral frequency $n$
are chosen as in \eqnref{decompose-frequency}.
\begin{IEEEeqnarray*}{r/C/l}
x(t) &=& \cisone(a\cdot t) \\
\text{with\quad}
a &=& b+n \\
n &\in& \Z \\
b &\in& (-\tfrac{1}{2},\tfrac{1}{2})
\end{IEEEeqnarray*}
This choice implies the following interpolation result
\begin{IEEEeqnarray*}{r/C/l}
Fx(t,\varphi)
   &=& \sum_{\tau\in\varphi} \cisone(a\cdot\tau)\cdot\sincone(t-\tau) \\
\forall \tau\in\varphi \qquad \\
Fx(t,\varphi)
   &=& \cisone(a\cdot\tau)\cdot\sum_{k\in\Z} \cisone(a\cdot k)\cdot\sincone(t-\tau-k) \\
\eqntext{because $a-b\in\Z$}
   &=& \cisone(a\cdot\tau)\cdot\sum_{k\in\Z} \cisone(b\cdot k)\cdot\sincone(t-\tau-k) \\
   &=& \cisone(a\cdot\tau)\cdot \cisone(b\cdot(t-\tau))
      \eqncomment{\eqnref{whittaker-interpolation-of-sine}} \\
   &=& \cisone(b\cdot t+n\cdot\tau) \\
Fx(t,\varphi)
   &=& \cisone\left(b\cdot t + n\cdot\fromperiodic(\varphi)\right)
\eqnlabel{sine-preserved}
\quad.
\end{IEEEeqnarray*}
The result can be viewed in \figref{whittaker-interpolation}.
\begin{figure}
\includegraphics[width=\columnwidth]{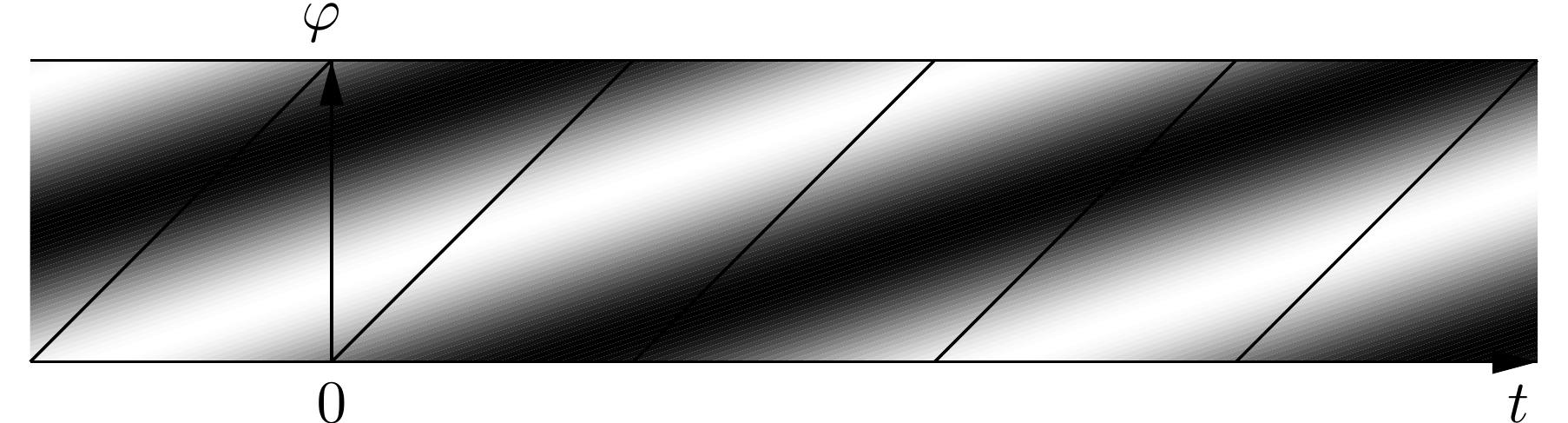}
\figcaption{whittaker-interpolation}{
The sine wave as in \figref{constant-interpolation}
is interpolated by \person{Whittaker} interpolation.
Along the diagonal lines you find the original sine wave.
}
\end{figure}
We obtain, that for every $t$
the function on a ring slice $\anonymfunc{\varphi}{Fx(t,\varphi)}$ is a sine wave
with the integral frequency $n$ that is closest to $a$.
That is, the closer $a$ is to an integer,
the more harmonics of a non-sine wave are mapped to corresponding harmonics
in a ring slice of $Fx$.

\paragraph*{Mapping a complex wave from the cylinder to an audio signal}

For time progression speed $v$ and frequency $\alpha$ we get
\begin{IEEEeqnarray*}{r/C/l}
z(t) &=& Fx(v\cdot t,\toperiodic(\alpha\cdot t)) \\
 &=& \cisone(b\cdot v\cdot t + n\cdot\fromperiodic(\toperiodic(\alpha\cdot t))) \\
\eqntext{because $\forall \tau\in\R\quad\fromperiodic(\toperiodic(\tau))-\tau \in \Z$}
 &=& \cisone(b\cdot v\cdot t + n\cdot\alpha\cdot t) \\
 &=& \cisone((b\cdot v + n\cdot\alpha)\cdot t)
\quad.
\end{IEEEeqnarray*}
This proves \eqnref{frequency-mapping}.


\subsubsection{Interpolation using kernels}
\seclabel{kernel-interpolation}

Actually, for the two-dimensional interpolation $F$
we can use any interpolation kernel $\kappa$,
not only $\sincone$ as in \eqnref{whittaker-interpolation}.
\begin{eqnarray}
%
%
Fx(t,\varphi)
   &=& \sum_{\tau\in\varphi} x(\tau)\cdot\kappa(t-\tau)
\eqnlabel{kernel-interpolation-pointwise}
\end{eqnarray}
The constant interpolation corresponds to $\kappa = \charfunc{(-1,0]}$.
Linear interpolation is achieved using a hat function.

\seclabel{kernel-interpolation-time-invariant}
\begin{lemma}[Time invariance of kernel interpolation]
The operator $F$ defined with an interpolation kernel as in
\eqnref{kernel-interpolation-pointwise}
is time-invariant according to \dfnref{time-invariance}.
\end{lemma}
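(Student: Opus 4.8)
The plan is to verify the defining identity \eqnref{time-invariance} of \dfnref{time-invariance} directly, by expanding both sides at an arbitrary point $(\sigma,\varphi)\in\R\times\periodicspace$ and comparing them. Concretely, I would establish
\[
F(\translater{t}{x})(\sigma,\varphi) = (\screwr{(t,\toperiodic(t))}{Fx})(\sigma,\varphi)
\]
for every signal $x$, every $t\in\R$, and every $(\sigma,\varphi)$. An alternative would be to exhibit an operator $I$ casting $F$ into the general time-invariant form \eqnref{time-invariant-ip} and then invoke the verification already carried out there, but the direct computation is cleaner and self-contained, so I would take that route.

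First I would unfold the left-hand side. Applying the kernel-interpolation formula \eqnref{kernel-interpolation-pointwise} and then the translation rule \eqnref{translate-right} turns it into $\sum_{\tau\in\varphi} x(\tau-t)\cdot\kappa(\sigma-\tau)$. Next I would unfold the right-hand side in the opposite order: the screw rule \eqnref{screw-right} reduces $(\screwr{(t,\toperiodic(t))}{Fx})(\sigma,\varphi)$ to $Fx(\sigma-t,\varphi-\toperiodic(t))$, and a second application of \eqnref{kernel-interpolation-pointwise} expands this to $\sum_{\tau'\in\varphi-\toperiodic(t)} x(\tau')\cdot\kappa((\sigma-t)-\tau')$.

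The crux is the reindexing that identifies these two sums. Writing $\varphi = p+\Z$, coset arithmetic in the quotient group gives $\varphi-\toperiodic(t) = (p-t)+\Z$, so the map $\tau'\mapsto\tau'+t$ is a bijection from $\varphi-\toperiodic(t)$ onto $\varphi$. Substituting $\tau=\tau'+t$ leaves the kernel argument invariant, since $(\sigma-t)-\tau' = \sigma-\tau$, and sends $x(\tau')$ to $x(\tau-t)$; the right-hand sum thus becomes $\sum_{\tau\in\varphi} x(\tau-t)\cdot\kappa(\sigma-\tau)$, which is precisely the left-hand side, completing the comparison.

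The one point that needs care — and the main obstacle — is that the sum over a coset is defined only as the symmetric limit $\lim_{n\to\infty}\sum_{\tau\in\setsection{\varphi}{[-n,n]}}$, so the substitution $\tau=\tau'+t$ shifts the truncation window to $\setsection{\varphi}{[-n+t,\,n+t]}$. I would therefore argue that a bounded shift of the window cannot change the limit. For compactly supported kernels such as $\kappa=\charfunc{(-1,0]}$ or the hat function the sums are finite and this is immediate; for the slowly decaying $\sincone$ kernel one must check that the finitely many terms entering and leaving near the moving boundary contribute a vanishing amount as $n\to\infty$, which is the only analytically delicate step.
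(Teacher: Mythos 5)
Your proposal is correct and is essentially the paper's own argument: both proofs expand $F$ via \eqnref{kernel-interpolation-pointwise} and identify the two expressions by the reindexing $\tau\mapsto\tau-t$, which carries the coset $\varphi$ onto $\varphi-\toperiodic(t)$ while leaving the kernel argument unchanged; the paper does this in a single forward chain of equalities whereas you expand both sides and meet in the middle. The only substantive difference is that you explicitly address the shift of the symmetric truncation window $[-n,n]$ in the limit defining the coset sum --- a convergence detail that genuinely needs the justification you sketch for the slowly decaying $\sincone$ kernel, and which the paper's proof passes over silently.
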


\begin{proof}
\begin{eqnarray*}
F(\translater{d}{x})(t,\varphi)
   &=& \sum_{\tau\in\varphi} (\translater{d}{x})(\tau)\cdot\kappa(t-\tau) \\
   &=& \sum_{\tau\in\varphi} x(\tau-d)\cdot\kappa((t-d)-(\tau-d)) \\
   &=& \sum_{\tau\in(\varphi-\toperiodic(d))} x(\tau)\cdot\kappa(t-d-\tau) \\
   &=& (\screwr{(d,\toperiodic(d))}{Fx})(t,\varphi)
\end{eqnarray*}
\end{proof}

Conversely, we like to note,
that kernel interpolation is not the most general form
when we only require time-invariance, linearity and static wave preservation.

The following considerations are simplified
by rewriting general kernel interpolation to a more functional style
using a discretisation operator and a mixed discrete/continuous convolution.

\begin{definition}[Quantisation]
With quantisation we mean the operation
that picks the signal values at integral time points from a continuous signal.
\begin{IEEEeqnarray*}{r/C/l}
\discretise{} &\in& \funcset{(\funcset{\R}{\V})}{(\funcset{\Z}{\V})} \\
\forall n\in\Z\quad
  \discretise{x}(n) &=& x(n)
\eqnlabel{definition-quantisation}
\end{IEEEeqnarray*}
\end{definition}
Here is, how quantisation operates on pointwise multiplied signals
and on periodic signals:
\begin{eqnarray}
\discretise{(x\cdot z)} &=& \discretise{x}\cdot \discretise{z}
 \eqnlabel{discretise-product} \\
\forall n\in\Z\quad
\discretise{(\compose{w}{\toperiodic})}(n) &=& w(\toperiodic(0))
 \eqnlabel{discretise-periodic}
\quad.
\end{eqnarray}

\begin{definition}[Mixed Convolution]
For $u\in\funcset{\Z}{\V}$ and $x\in\funcset{\R}{\R}$
then mixed discrete/continuous convolution is defined by
\begin{eqnarray*}
(u * x)(t) &=& \sum_{k\in\Z} u(k)\cdot x(t-k)
\end{eqnarray*}
\end{definition}
We can express mixed convolution
also by purely discrete convolutions:
\begin{eqnarray*}
\discretise(\translatel{t}{(u * x)})
   &=& u * \discretise(\translatel{t}{x})
\eqnlabel{mixed-convolution-as-discrete-convolution}
\quad.
\end{eqnarray*}

It holds
\begin{eqnarray}
\translater{t}{(u*x)} &=& u*(\translater{t}{x})
,
\eqnlabel{convolution-translation-associative}
\end{eqnarray}
because translation can be written as convolution
with a translated \person{Dirac} impulse
and convolution is associative in this case
(and generally when infinity does not cause problems).
Thus we will omit the parentheses.
We like to note,
that this example demonstrates the usefulness of the functional notation,
since without it even a simple statement like
\eqnref{convolution-translation-associative}
is hard to formulate in a correct and unambiguous way.

These notions allow us to rewrite kernel interpolation
\eqnref{kernel-interpolation-pointwise}:
\begin{IEEEeqnarray*}{r/C/l}
\forall \tau\in\varphi\quad
Fx(t,\varphi)
   &=& \sum_{k\in\Z} x(k+\tau)\cdot\kappa(t-(k+\tau)) \\
\forall \tau\in\varphi\quad
\anonymfunc{t}{Fx(t,\varphi)}
   &=& \discretise{\left(\translatel{\tau}{x}\right)}
         * \translater{\tau}{\kappa}
\quad.
\eqnlabel{kernel-interpolation}
\end{IEEEeqnarray*}
The last line can be read as follows:
The signal on the cylinder along a line parallel to the time axis
can be obtained by taking discrete points of $x$
and interpolate them using the kernel $\kappa$.

\subsubsection{Envelope preservation}
\seclabel{envelope-preservation}

We can now generalise the preservation of static waves from
\secref{static-wave-preservation}
to envelopes different from a constant function.
\begin{lemma}
\lemlabel{envelope-preservation}
Given an envelope $f$ from $\funcset{\R}{\R}$
and an interpolation kernel $\kappa$
that preserves any translated version of $f$, i.e.
\begin{eqnarray}
\forall t\quad \discretise{(\translatel{t}{f})} * \kappa &=& \translatel{t}{f},
\eqnlabel{preserve-translated-envelope}
\end{eqnarray}
then and only then,
a wave of constant shape $w$ enveloped by $f$
is converted to constant waveshapes on the cylinder rings
enveloped by $f$ in time direction:
\begin{eqnarray}
F(f\cdot (\compose{w}{\toperiodic}))(t,\varphi)
 &=& f(t)\cdot w(\varphi)
\quad.
\eqnlabel{convert-envelope-wave}
\end{eqnarray}
\end{lemma}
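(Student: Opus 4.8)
The plan is to establish the biconditional through a single chain of identities for $\anonymfunc{t}{F(f\cdot(\compose{w}{\toperiodic}))(t,\varphi)}$, an initial unconditional segment of which isolates the periodic factor $w(\varphi)$ while its final segment turns out to be equivalent to the kernel hypothesis \eqnref{preserve-translated-envelope}. I fix a cyclic quantity $\varphi$ together with an arbitrary representative $\tau\in\varphi$ and start from the functional form of kernel interpolation \eqnref{kernel-interpolation}, namely $\anonymfunc{t}{Fx(t,\varphi)} = \discretise{(\translatel{\tau}{x})} * \translater{\tau}{\kappa}$ with $x=f\cdot(\compose{w}{\toperiodic})$.

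First I would factor the discrete signal being convolved. Translation distributes over the pointwise product, and so does quantisation by \eqnref{discretise-product}, giving $\discretise{(\translatel{\tau}{x})} = \discretise{(\translatel{\tau}{f})}\cdot\discretise{(\translatel{\tau}{(\compose{w}{\toperiodic})})}$. The crucial simplification is that on the fibre $\tau+\Z$ the static wave $\compose{w}{\toperiodic}$ takes the single value $w(\varphi)$, because $\toperiodic(n+\tau)=\varphi$ for integral $n$, so its sampled sequence is the constant $w(\varphi)$, cf.~\eqnref{discretise-periodic}. Since $w(\varphi)$ is a fixed vector and mixed convolution is linear in its discrete argument, I can pull it in front, obtaining $\anonymfunc{t}{Fx(t,\varphi)} = w(\varphi)\cdot(\discretise{(\translatel{\tau}{f})} * \translater{\tau}{\kappa})$. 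Everything up to here is unconditional.

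It then remains to analyse the bracket. Commuting the translation through the convolution by \eqnref{convolution-translation-associative} rewrites it as $\translater{\tau}{(\discretise{(\translatel{\tau}{f})} * \kappa)}$. For the sufficiency direction I apply the hypothesis \eqnref{preserve-translated-envelope} at shift $\tau$ to collapse the inner convolution to $\translatel{\tau}{f}$, and then cancel the opposing translations via $\translater{\tau}{(\translatel{\tau}{f})} = f$; this delivers $Fx(t,\varphi)=f(t)\cdot w(\varphi)$, that is \eqnref{convert-envelope-wave}. For the necessity direction I specialise to the constant wave $w\equiv 1$, so that \eqnref{convert-envelope-wave} forces $\translater{\tau}{(\discretise{(\translatel{\tau}{f})} * \kappa)} = f$; translating left by $\tau$ yields $\discretise{(\translatel{\tau}{f})} * \kappa = \translatel{\tau}{f}$, and as $\varphi$ runs over $\periodicspace$ and $\tau$ over its representatives, $\tau$ exhausts $\R$, giving \eqnref{preserve-translated-envelope} for every shift.

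The main obstacle is the periodic bookkeeping in the middle step: rigorously justifying that $\compose{w}{\toperiodic}$ is constant along $\tau+\Z$ and that this constant commutes cleanly out of an infinite mixed convolution, whose convergence I treat formally as the paper does elsewhere. The freedom in the representative $\tau$ is doing double duty here — it makes the factoring legitimate in the sufficiency direction and, in the necessity direction, it is precisely what supplies the universal quantifier over shifts in \eqnref{preserve-translated-envelope}.
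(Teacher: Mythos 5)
Your proposal is correct and follows essentially the same route as the paper's own proof: the same reduction via \eqnref{kernel-interpolation}, factoring out $w(\varphi)$ using \eqnref{discretise-product} and \eqnref{discretise-periodic}, and the specialisation $w\equiv 1$ for the converse. You merely spell out the translation-cancellation step that the paper dismisses as ``obvious'' and make explicit that varying the representative $\tau$ over all of $\R$ yields the universal quantifier in \eqnref{preserve-translated-envelope}.
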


\begin{proof}
\begin{eqnarray*}
\noalign{$\forall \tau\in\varphi\quad
\anonymfunc{t}{F(f\cdot (\compose{w}{\toperiodic}))(t,\varphi)}$}
   &=& \discretise{\left(\translatel{\tau}
               {(f\cdot (\compose{w}{\toperiodic}))}\right)}
         * \translater{\tau}{\kappa} \\
   &=& \discretise{\left((\translatel{\tau}{f})\cdot
               (\compose{(\translatel{\varphi}{w})}{\toperiodic})\right)}
         * \translater{\tau}{\kappa} \\
   &=& w(\varphi)\cdot\discretise{(\translatel{\tau}{f})}
         * \translater{\tau}{\kappa}
      \eqncomment{(\ref{eqn:discretise-product},\ref{eqn:discretise-periodic},\ref{eqn:translate-left})}
\end{eqnarray*}
Now the implication
 $\eqnref{preserve-translated-envelope} \implies
     \eqnref{convert-envelope-wave}$
should be obvious,
whereas the converse 
 $\eqnref{convert-envelope-wave} \implies
     \eqnref{preserve-translated-envelope}$
can be verified by setting $\forall \varphi\ w(\varphi) = 1$.
This special case means
that the envelope $f$ used as input signal is preserved in the sense
\begin{eqnarray*}
Ff(t,\varphi) &=& f(t)
\quad.
\end{eqnarray*}

\end{proof}

\begin{corollary}
When we convert back to a one-dimensional audio signal
under the condition \eqnref{preserve-translated-envelope},
then the time control only affects the envelope
and the phase control only affects the pitch:
\[
S_{h,g} (F(f\cdot(\compose{w}{\toperiodic}))) =
(\compose{f}{h})\cdot(\compose{w}{g})
\quad.
\]
\end{corollary}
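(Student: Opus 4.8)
The plan is to recognise that this corollary is nothing more than the composition of the observation operator $S_{h,g}$ with the explicit two-dimensional form already supplied by the preceding lemma, so the whole argument collapses to a single substitution. Since the two sides are functions on $\R$, I would prove the asserted equality pointwise, evaluating both sides at an arbitrary time $t$; this is legitimate because two functions agree exactly when they agree at every argument.

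First I would unfold the definition of the observation operator from \eqnref{cylinder-to-audio}, namely $S_{h,g}y(t)=y(h(t),g(t))$. Applied to $y = F(f\cdot(\compose{w}{\toperiodic}))$, the left-hand side becomes the cylinder function $F(f\cdot(\compose{w}{\toperiodic}))$ evaluated at the point $(h(t),g(t))$, where $h(t)\in\R$ is the shape/time coordinate and $g(t)\in\periodicspace$ the phase coordinate. The single place where the hypothesis enters is the next step: invoking the conclusion \eqnref{convert-envelope-wave} of the preceding lemma, which holds precisely under the envelope-reconstruction condition \eqnref{preserve-translated-envelope} and states $F(f\cdot(\compose{w}{\toperiodic}))(t,\varphi)=f(t)\cdot w(\varphi)$ for all $t$ and $\varphi$. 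Specialising $t\mapsto h(t)$ and $\varphi\mapsto g(t)$ then yields $f(h(t))\cdot w(g(t))$, which I would read back as $(\compose{f}{h})(t)\cdot(\compose{w}{g})(t)$, i.e.\ the value at $t$ of the product $(\compose{f}{h})\cdot(\compose{w}{g})$. As $t$ was arbitrary, the two functions coincide.

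I do not expect any genuine obstacle: the corollary is an immediate specialisation, and all the substance has already been carried by \lemref{envelope-preservation}. The only points that demand a little care are type-correctness --- checking that feeding the cyclic value $g(t)\in\periodicspace$ into $w$ is well defined because $w\in\funcset{\periodicspace}{\V}$ --- and making explicit that the resulting separation $f(h(t))\cdot w(g(t))$ is exactly what lets the shape control $h$ act only on the envelope (through $\compose{f}{h}$) and the phase control $g$ act only on the waveform and hence the pitch (through $\compose{w}{g}$), which is the interpretative content the corollary advertises.
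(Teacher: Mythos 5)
Your proposal is correct and matches the paper's own (commented-out) proof exactly: both unfold $S_{h,g}$, invoke \eqnref{convert-envelope-wave} from the preceding lemma at the point $(h(t),g(t))$, and read off the result as $(\compose{f}{h})\cdot(\compose{w}{g})$. Nothing to add.
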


\subsubsection{Special cases}
\seclabel{resampling}

As stated in item~\ref{item:resampling} of \secref{principle}
we like to have resampling as special case
of our phase and time manipulation algorithm.
It turns out, that this property is equivalent
to putting the input signal $x$ on the diagonal lines
as in \figref{constant-interpolation} and \figref{whittaker-interpolation}.
We will derive, what this imposes on the choice of the kernel $\kappa$
when $F$ is defined via a kernel as in \eqnref{kernel-interpolation}.
\begin{lemma}
For $F$ defined by
\[
\forall \tau\in\varphi\quad
\anonymfunc{t}{Fx(t,\varphi)}
   = \discretise{\left(\translatel{\tau}{x}\right)}
         * \translater{\tau}{\kappa}
\]
it holds
\begin{eqnarray}
\forall x\ \forall t\in\R \quad
x(t) &=& Fx(t,\toperiodic(t))
\eqnlabel{signal-on-unit-helix}
\end{eqnarray}
if and only if
\[
\discretise{\kappa} = \delta
,
\]
that is,
$\kappa$ is a so called interpolating kernel.

Here, $\delta$ is the discrete \person{Dirac} impulse,
that is
\[
\forall k\in\Z\quad
\delta(k) = \begin{cases} 1 &: k=0 \\ 0 &: \text{otherwise} \end{cases}
.
\]
\end{lemma}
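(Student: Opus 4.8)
The plan is to collapse the two-dimensional condition on the unit helix into a one-dimensional convolution identity and then to recognise it as the statement that $\delta$ is the neutral element for mixed convolution.

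First I would evaluate $F$ along the unit helix. Putting $\varphi = \toperiodic(t)$ into the functional definition and choosing the representative $\tau = t$ — which is admissible because $t\in\toperiodic(t)$ — lets me write $Fx(t,\toperiodic(t))$ as the mixed convolution $\discretise{(\translatel{t}{x})} * \translater{t}{\kappa}$ evaluated at $t$. Expanding the convolution and simplifying the translations, using $\discretise{(\translatel{t}{x})}(k)=x(t+k)$ and $(\translater{t}{\kappa})(t-k)=\kappa(-k)$, yields
\[
Fx(t,\toperiodic(t))
  = \sum_{k\in\Z} x(t+k)\cdot\kappa(-k)
  = (\discretise{\kappa} * x)(t)
\quad,
\]
where the final equality is just the reindexing $j=-k$ followed by the definitions of quantisation and mixed convolution. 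Consequently \eqnref{signal-on-unit-helix} is equivalent to the single demand that $\discretise{\kappa} * x = x$ hold for every signal $x$.

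The easy implication then follows immediately: if $\discretise{\kappa}=\delta$, then in $\sum_{j\in\Z}\kappa(j)\cdot x(t-j)$ only the term $j=0$ survives and equals $x(t)$, so $\delta * x = x$ and \eqnref{signal-on-unit-helix} holds; no convergence question arises because the sum is finite.

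For the converse I would recover the kernel samples $\kappa(m)$ one at a time by probing $\discretise{\kappa} * x = x$ with tailored test signals. The main obstacle is exactly that the hypothesis quantifies over \emph{all} $x$, so I must, for each integer $m$, produce a concrete $x$ that isolates $\kappa(m)$: I would take a continuous bump equal to $1$ at the integer $-m$ and vanishing at every other integer, for instance the translated hat function $\anonymfunc{s}{\max(0,1-\abs{s+m})}$ (scaled by a fixed nonzero vector when $\V\neq\R$). Evaluating the identity at $t=0$ then reduces its left-hand side to $\kappa(m)$ and its right-hand side to $x(0)=\delta(m)$, giving $\discretise{\kappa}(m)=\kappa(m)=\delta(m)$. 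As $m$ ranges over $\Z$ this establishes $\discretise{\kappa}=\delta$ and closes the equivalence.
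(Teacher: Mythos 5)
Your proof is correct and follows essentially the same route as the paper: both reduce the unit-helix condition \eqnref{signal-on-unit-helix} to the statement that convolution with $\discretise{\kappa}$ acts as the identity on signals, and both establish the direction towards $\discretise{\kappa}=\delta$ by probing with a test signal whose integer samples form a (shifted) discrete Dirac impulse. The only cosmetic difference is that you keep $t$ real and extract each $\kappa(m)$ from a family of translated hat functions evaluated at $t=0$, whereas the paper first restricts to integer $t$, derives the discrete identity $\discretise{x} = \discretise{x}*\discretise{\kappa}$, and then plugs in a single signal with $\discretise{x}=\delta$.
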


\begin{proof}

``$\Rightarrow$''
\begin{IEEEeqnarray*}{r/C/l}
\forall x\ \forall t\in\R \quad
x(t)
 &=& Fx(t,\toperiodic(t)) \\
 &=& (\discretise{(\translatel{t}{x})} * \translater{t}{\kappa})(t) \\
\eqntext{consider only $t\in\Z$ and rename it to $k$}
\forall x\ \forall k\in\Z \quad
x(k)
 &=& (\discretise{(\translatel{k}{x})} * \translater{k}{\kappa})(k) \\
 &=& (\discretise{x} * \kappa)(k) \\
\forall x \quad
\discretise{x}
 &=& \discretise{(\discretise{x} * \kappa)} \\
 &=& \discretise{x} * \discretise{\kappa}
         \qquad\text{(discrete convolution)}
\quad.
\end{IEEEeqnarray*}
For $\discretise{x} = \delta$
we get $\delta = \delta*\discretise{\kappa} = \discretise{\kappa}$.

\bigskip
\noindent ``$\Leftarrow$'' \\
Conversely, every interpolating kernel $\kappa$
asserts \eqnref{signal-on-unit-helix}:
\begin{IEEEeqnarray*}{r/C/l"l}
\noalign{$ \forall x\ \forall t\in\R \quad
(\discretise{(\translatel{t}{x})} * \translater{t}{\kappa})(t) $}
\qquad\qquad\qquad\qquad
 &=& (\discretise{(\translatel{t}{x})} * \kappa)(0)
        & \eqnremark{\eqnref{translate-right}} \\
 &=& \discretise{(\discretise{(\translatel{t}{x})} * \kappa)}(0)
        & \eqnremark{\eqnref{definition-quantisation}} \\
 &=& (\discretise{(\translatel{t}{x})} * \discretise{\kappa})(0)
        & \eqnremark{\eqnref{mixed-convolution-as-discrete-convolution}} \\
 &=& (\discretise{(\translatel{t}{x})} * \delta)(0) \\
 &=& (\translatel{t}{x})(0) \\
 &=& x(t)
\quad.
\end{IEEEeqnarray*}

\end{proof}

Now, when our conversion from the cylinder to the one-dimensional signal
does only walk along the unit helix,
we get general time warping as special case of our method:
\begin{IEEEeqnarray*}{r/C/l"l}
S_{h,\compose{\toperiodic}{h}}(Fx)
 &=& \anonymfunc{t}{Fx(h(t), \toperiodic(h(t)))}
       & \eqnremark{\eqnref{cylinder-to-audio}} \\
 &=& \anonymfunc{t}{x(h(t))}
       & \eqnremark{\eqnref{signal-on-unit-helix}} \\
 &=& \compose{x}{h}
\end{IEEEeqnarray*}
For $h=\id$ we get the identity mapping,
for $h(t) = v\cdot t$ we get resampling by speed factor~$v$.

\section{Discrete Signals}
\seclabel{discrete-signal}

For the application of our method to sampled signals
we could interpolate a discrete signal $u$
containing a wave with period $T$,
thus getting a continuous signal $x$ with $x(\frac{n}{T}) = u(n)$
and proceed with the technique for continuous signals from \secref{cont-signal}.
However, when working out the interpolation
this yields a skew grid with two alternating cell heights
and a doubled number of parallelogram cells,
which seems to be unnatural to us.
Additionally it would require three distinct interpolations,
e.g. two distinct interpolations in the unit helix direction
and one interpolation in time direction.
Instead we want to propose a periodic scheme
where we need two interpolations with the same parameters
in unit helix (``step'') direction
and one interpolation in the skew ``leap'' direction.
This interpolation scheme is also time-invariant
in the sense of item~\ref{item:time-invariance} in \secref{properties}
and \dfnref{time-invariance}
when we restrict the translation distances to multiples of the sampling period.

The proposed scheme is shown in \figref{grid-skew}.
\begin{figure}
\includegraphics[width=\columnwidth]{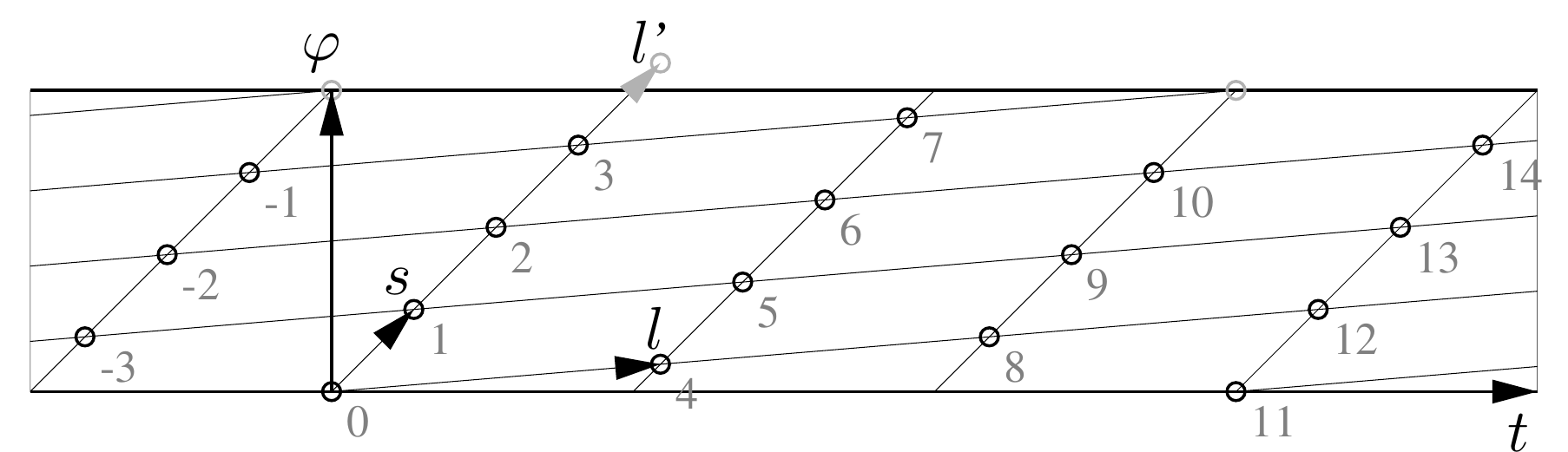}
\figcaption{grid-skew}{
Mapping of the sampled values to the cylinder in our method.
The variables $s$ and $l$ are coordinates in the skew coordinate system.
}
\end{figure}
We have a skew coordinate system with steps $s$ and leaps $l$.
We see, that this scheme can cope with non-integral wave periods,
that is, $T$ can be a fraction
(in \figref{grid-skew} we have $T=\frac{11}{3}$).
Whenever the wave period is integral,
the leap direction coincides with the time direction.
The grid nicely matches the periodic nature of the phase.
The cyclic phase yields ambiguities,
e.g. a leap could also go to where $l'$ is placed,
since this denotes the same signal value.
We will later see, that this ambiguity is only temporary
and will vanish at the end \eqnref{cell-coordinates}.
Thus we use the unique representative $\fromperiodic(\varphi)$ of $\varphi$.
To get $(l,s)$ from $(t,\fromperiodic(\varphi))$
we have to convert the coordinate systems,
i.e. we have to solve the simultaneous linear equations
\begin{eqnarray*}
\frac{1}{T}\cdot
\begin{pmatrix}
\round{T} & 1 \\
\round{T}-T & 1
\end{pmatrix}
\cdot
\begin{pmatrix}
l \\ s
\end{pmatrix}
&=&
\begin{pmatrix}
t \\ \fromperiodic(\varphi)
\end{pmatrix}
\end{eqnarray*}
where $\round$ is any rounding function we like.
E.g. in \figref{grid-skew} it is $\round{T} = 4$.
Its solution is
\begin{IEEEeqnarray*}{r/C/l}
l &=& t - \fromperiodic(\varphi) \eqnlabel{leap-solution} \\
s &=& t\cdot T - l\cdot\round{T}
\quad.
\end{IEEEeqnarray*}

Using the interpolated input $x$
we may interpolate $y$ linearly 
\begin{IEEEeqnarray*}{r/C/l}
r &=& \floor{l}\cdot\round{T} + s \\
\lerp(\xi,\eta)(\lambda) &=& \xi + \lambda \cdot (\eta-\xi)
   \eqnlabel{linear-interpolation} \\
\fractional\lambda &=& \lambda - \floor{\lambda} \\
y(t,\varphi)
   &=& \lerp\left(x(\tfrac{r}{T}), x(\tfrac{r+\round T}{T})\right)
            (\fractional{l})
\end{IEEEeqnarray*}
or more detailed
\begin{eqnarray*}
n &=& \floor{l}\cdot\round{T} + \floor{s} \\
a &=& \lerp(u(n),u(n+1))(\fractional{s}) \\
b &=& \lerp(u(n+\round{T}),u(n+\round{T}+1)) \\&&\qquad (\fractional{s}) \\
y(t,\varphi) &=& \lerp(a, b)(\fractional{l})
\quad.
\end{eqnarray*}
Actually, we do not even need to compute $s$
since by expansion of $s$ the formula for $r$ can be simplified
and it is $\fractional{s} = \fractional{r}$.
From $l$ we actually only need $\fractional{l}$.
This proves, that every representative of $\varphi$
could be used in $\eqnref{leap-solution}$.
\begin{IEEEeqnarray*}{r/C/l}
r &=& t\cdot T - \fractional{l}\cdot\round{T} \eqnlabel{cell-coordinates} \\
n &=& \floor{r} \\
a &=& \lerp(u(n),u(n+1))(\fractional{r}) \\
b &=& \lerp(u(n+\round{T}),u(n+\round{T}+1))(\fractional{r})
.
\end{IEEEeqnarray*}

\subsection{General Interpolations}
\seclabel{general-interpolations}

Other interpolations than the linear one
use the same computations to get $\fractional{l}$ and $r$,
but they access more values in the environment of $n$,
i.e. $u(n+j+k\cdot\round{T})$ for some $j$ and $k$.
E.g. for linear interpolation in the step direction
and cubic interpolation in the leap direction,
it is $j\in\{0,1\}, k\in\{-1,0,1,2\}$.

\subsection{Coping with Boundaries}
\seclabel{boundaries}

So far we have considered only signals that are infinite in both time directions.
When switching to signals with finite time domain
we become aware that our method consumes
more data than it produces at the boundaries.
This is however true for all interpolation methods.

We start considering linear interpolation:
In order to have a value for any phase at a given time,
a complete vertical bar must be covered by interpolation cells.
That happens the first time at time point $1$.
The same consideration is true for the end of the signal.
That is, our method always reduces the signal by two waves.
Analogously, for $k$ node interpolation in leap direction
we lose $k$ waves by pitch shifting.

If we would use extrapolation at the boundaries,
then for the same time but different phases
we would sometimes have to interpolate and sometimes we would extrapolate.
In order to avoid this, we just alter any $t\in[0,1)$ to $t = 1$
and limit $t$ accordingly at the end of the signal.

\subsection{Efficiency}
\seclabel{efficiency}

The algorithm for interpolating a value on the cylinder is actually very efficient.
The computation of the interpolation parameters
and signal value indices in \eqnref{cell-coordinates}
needs constant time,
and the interpolation is proportional
to the number of nodes in step direction
and the number of nodes in leap direction.
Thus for a given interpolation type,
generating an audio signal from the cylinder model
needs time proportional to the signal length
and only constant memory additional to the signal storage.

\subsection{Implementation}
\seclabel{implementation}

A reference implementation of the developed algorithm
is written in the purely functional programming language \haskell{}
\cite{peyton-jones1998haskell}.
The tree of modules is located at
\url{http://darcs.haskell.org/synthesizer/src/}.
In \cite{thielemann2004haskellsignal} we have already shown,
how this language fulfils the needs of signal processing.
The absence of side effects
makes functional programming perfect for parallelisation.
Recent progress on parallelisation in \haskell{}
\cite{peyton-jones2008dataparallel}
and the now wide availability of multi-core machines in the consumer market
justifies this choice.

We can generate the cylindrical wave function
with the function \code{Synthesizer.Basic.Wave.sampledTone}
given the interpolation in leap direction,
the interpolation in step direction,
the wave period of the input signal and the input signal.
The result of this function can then be used as input
for an oscillator that supports parametrised waveforms,
like \code{Synthesizer.Plain.Oscillator.shapeMod}.
By the way, this implementation again shows,
how functional programming with higher order functions supports modularisation:
The shape modulating oscillator can be used for any other kind
of parametrised waveform,
e.g. waveforms given by analytical functions.
This way, we have actually rendered the tones with morphing shape
in the figures of this paper.
In an imperative language you would certainly call the waveform
being implemented as call-back function.
However due to aggressive inlining
the compiled program does not actually need to callback the waveform function
but the whole oscillator process is expanded to a single loop.

\subsection{Streaming}
\seclabel{streaming}

Due to its lazy nature,
\haskell{} allows simple implementation of streaming,
that is, data is processed as it comes in,
and thus processing consumes only a constant amount of memory.
If we apply our pitch shifting and time stretching algorithm
to an ascending sequence of time values, streaming is possible.
This applies, since it is warranted,
that $\frac{r}{T}$ is not too far away from $t$.
Since $\fractional{l}\in[0,1)$ it holds
\begin{IEEEeqnarray*}{r/C/l}
t-\frac{r}{T} &\in& \left[0,\frac{\round{T}}{T}\right)
\eqnlabel{time-lag}
\quad.
\end{IEEEeqnarray*}
Thus we can safely move our focus to $t\cdot T - \round{T}$
in the discrete input signal $u$,
which is equivalent to a combined translation and turning
of the wave function on the cylinder.

What makes the implementation complicated is the handling of boundaries.
At the beginning we limit the time parameter as described in \secref{boundaries}.
However at the end, we have to make sure
that there is enough data for interpolation.
It is not so simple to limit $t$
to the length of input signal minus size of data needed for interpolation,
since determining the length of the input signal means reading it until the end.
Instead when moving the focus,
we only move as far as there is enough data available for interpolation.
The function is implemented by
\code{Synthesizer.Plain.\linebreak[3]Oscillator.shapeFreqModFromSampledTone}.

\section{Applications}
\seclabel{application}

\subsection{Combined pitch shifting and time scaling}
With a frequency control curve $f$ and a shape control $g$
we get combined pitch shifting and time scaling
out of our model using the conversion $S_{\int\hspace{-0.3em}f,\ g}$
(see \eqnref{cylinder-to-audio}).

\subsection{Wavetable synthesis}
Our algorithm might be used as alternative to wavetable synthesis
in sampling synthesisers \cite{massie1998wavetable}.
For wavetable synthesis a monophonic sound is reduced to a set of waveforms,
that is stored in the synthesiser.
On replay the synthesiser plays those waveforms successively in small loops,
maybe fading from one waveform to the next one.
If we do not reduce the set of waveforms,
but just chop the input signal into wave periods,
then apply wavetable synthesis with fading between waveforms,
we have something very similar to our method.
In \figref{trapezoid} we compare wavetable synthesis
and our algorithm using the introductory example of \figref{goal}.
In this example both the wavetable synthesis
and our method perform equally well.
If not stated otherwise,
in this and all other figures we use linear interpolation.
This minimises artifacts from boundary handling
and the results are good enough.
\begin{figure}
\begin{tabular}{l}
\includegraphics[width=\columnwidth]{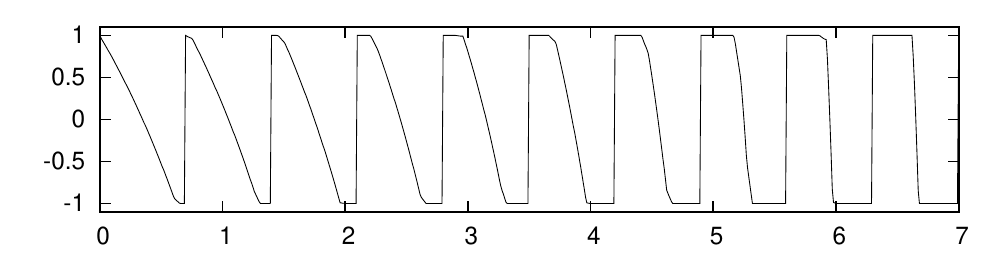} \\
\includegraphics[width=\columnwidth]{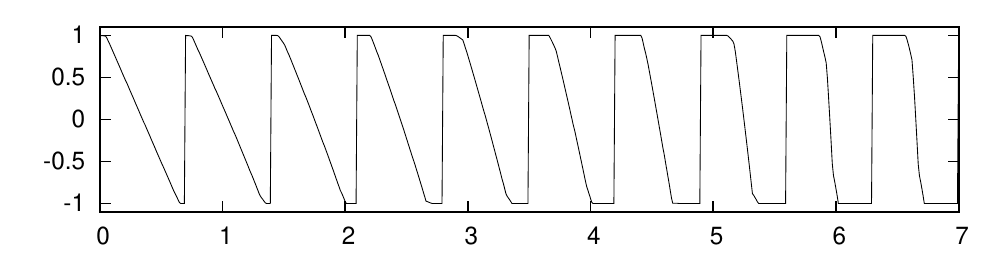}
\end{tabular}
\figcaption{trapezoid}{
Pitch shifting performed on the signal of \figref{goal}
using linear interpolation in both directions.
Above is the result of wavetable synthesis,
below is the result of our method.
}
\end{figure}

\subsection{Compression}
Wavetable synthesis can be viewed as a compression scheme:
Sounds are saved in the compressed form of a few waves
in the wavetable synthesiser
and are decompressed in realtime when playing the sound.
Analogously we can employ our method
for compression of monophonic sounds.
For compression we simply shrink the time scale
and for decompression we stretch it by the reciprocal factor.
An example is given in \figref{compression}.
\begin{figure}
\begin{tabular}{rl}
   & \vcentergraphics{piano} \\
 2 & \vcentergraphics{reconstructed-2} \\
 5 & \vcentergraphics{reconstructed-5} \\
10 & \vcentergraphics{reconstructed-10} \\
25 & \vcentergraphics{reconstructed-25} \\
50 & \vcentergraphics{reconstructed-50}
\end{tabular}
\figcaption{compression}{
We show how a piano sound
is altered by compression and decompression.
The top-most graph is the original sound.
The graphs below are the results
of compression and decompression with cubic interpolation
by the associated factors in the left column.
Because the interpolation needs a margin at beginning,
we have copied the first two periods when compressing and decompressing.
}
\end{figure}

The shrinking factor, and thus the compression factor,
is limited by non-harmonic frequencies.
These are always present
in order to generate envelopes or phasing effects.
Consider the frequency~$a$
that is decomposed into $b+n$ as in \eqnref{decompose-frequency},
no pitch shift, i.e. $\alpha=1$,
and the shrinking factor~$v$.
According to \eqnref{frequency-mapping},
the frequency $b+n$ is mapped to $b\cdot v + n$.
In order to be able to decompose $b\cdot v + n$
into $b\cdot v$ and $n$ again on decompression,
it must be $b\cdot v\in(-\frac{1}{2},\frac{1}{2})$.
This implies, that if $b$ is the maximum absolute deviation
from an integral frequency, that you want to be able to reconstruct,
then it must be $v < \frac{1}{2\cdot b}$.

The mapping of frequencies can be best visualised
using the frequency spectrum
as in \figref{spectrum-compression}.
Note how the peaks become wider by the compression factor
while their shape is maintained.
The resolution is divided by the compression factor,
and this is why the compressed data actually consumes less space.
The shape of a peak expresses the envelope of the according harmonic
and widening it, means a time shrunken envelope.
\begin{figure}
\begin{tabular}{l}
\includegraphics[width=\columnwidth]{spectrum} \\
\includegraphics[width=\columnwidth]{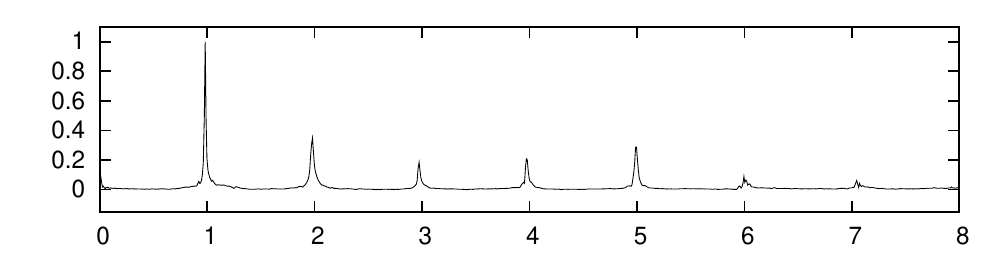}
\end{tabular}
\figcaption{spectrum-compression}{
The first graph presents the lower part
of the absolute spectrum of a piano sound.
This is then compressed by a factor~4 in the second graph.
}
\end{figure}

If we compress too much, then peaks will overlap
and we get aliasing effects on decompression.
Aliasing can be suppressed by smoothing
across the same phase of all waves.
That is, for the monophonic sound $x$ with period $T$
and a smoothing filter window $w$,
we should compress $x * (\upsample{\round{T}}{w})$ instead of $x$.
We use the up arrow for the upsampling operator where
\[
\forall \mset{k,c}\subset\Z \quad
 {(\upsample{c}{w})}_k =
  \begin{cases}
    w_{k/c} &: k\congruent 0 \mod c \\
    0 &: k\not\congruent 0 \mod c
  \end{cases}.
\]

Actually, we could use the frequency spectrum
not only for visualising the compression (or pitch-shifting),
but we could also use the frequency spectrum itself for compression.
The advantages would be simpler anti-aliasing
(we would just throw away values outside bands around the harmonics)
and we could also strip high harmonics,
once they fall below a given threshold.
The advantage of computing in the time-domain is,
that it consumes only linear time with respect to the signal length,
not linear-logarithmic time like the \Fourier{} transform,
that it can be applied in a streaming way
and allows to adapt the compression factor
to local characteristics of a sound.
For instance, you may use a shrinking factor close to 1
for fast varying portions of the signal
and use a larger shrinking factor on slowly modulated portions.

\subsection{Loop sampled sounds}
Another way to save memory in sampling synthesisers is to loop sounds.
This is especially important in order to get infinite sounds
like string sounds
out of a finite storage.
Looping means to repeat portions of a sampled sound.
The problem is to find positions of matching sound characteristics:
A loop that causes a jump or an abrupt change of the waveform
is a nasty audible artifact.
Especially in samples of natural sounds
there might be no such matching positions, at all.
Then the question is, whether the sample can be modified
in a way that preserves the sound but provides fine loop boundaries.
Several solutions using fading or time reversal have been proposed.

Our method offers a new way:
We may move the time forth and back while keeping pitch constant.
In \figref{loop-control} we show two reasonable time control curves.
Both control curves start with exactly reproducing the sampled sound
and then smoothly enter a cycle.
Actually, we copy the first part verbatim
instead of running time stretching with factor~1,
since our method cannot reproduce the beginning of the sound
due to interpolation margins.
The cycle of the first control curve consists of a sine,
that warrants smooth changes of the time line.
However with this control,
interferences are prolonged at the loop boundaries,
which is clearly audible.
It turns out that the second control curve,
namely the zig-zag curve, sounds better.
It preserves any chorus effect
and the change of the time direction is not as bad as expected.

A nice property of this approach is,
that the loop duration is doubled with respect to the actually looped data.
In contrast to that,
a loop body generated by simple cross-fading of parts of the sound,
say, with a \person{von Hann} window,
would half the loop body size and sounds more hectically.

Since the time control affects only the waveform,
it is warranted that at the cycle boundaries of the time control
the waveforms of the time manipulated sound match, too.
In order to assert the also the phases match
you have to choose a time control cycle length
that is an integral multiple of the wave period.
\begin{figure}
\begin{tabular}{l}
\includegraphics[width=\columnwidth]{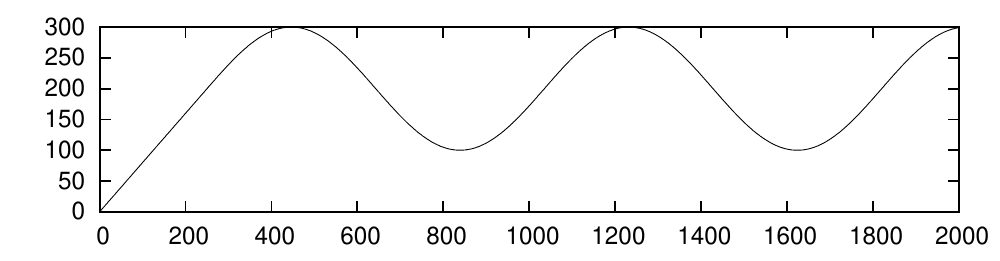} \\
\includegraphics[width=\columnwidth]{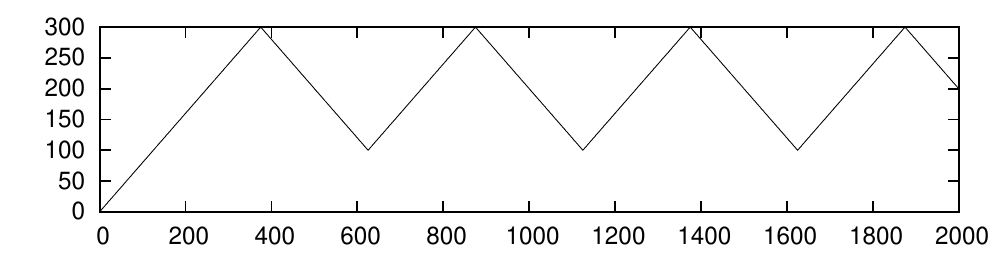} \\
\end{tabular}
\figcaption{loop-control}{
Two possible time control curves for generating a loopable portion
of a sampled sound.
}
\end{figure}

\subsection{Making inaudible harmonics audible}
Remember, that our model does not preserve formants.
Another application, where this is appropriate,
is to process sounds, where formants are not audible anyway,
namely ultrasound signals.
Our method can be used, to make monophonic ultrasound signals audible
by decreasing the pitch and while maintaining the length.
In \figref{bat} we show an echolocation call of a bat.
It is a chirp from about 35~kHz to 25~kHz sampled at 441~kHz.
The chirp nature does not match the requirements of our algorithm,
so it is not easy to choose a base frequency.
We have chosen 25~kHz and divide the frequency by factor~5
while maintaining the length.
Unfortunately the waves have no special form that we can preserve.
So this example might serve a demonstration
of the robustness of our algorithm with respect to non-harmonic frequencies
and the preservation of the envelope.
\begin{figure}
\begin{tabular}{l}
\includegraphics[width=\columnwidth]{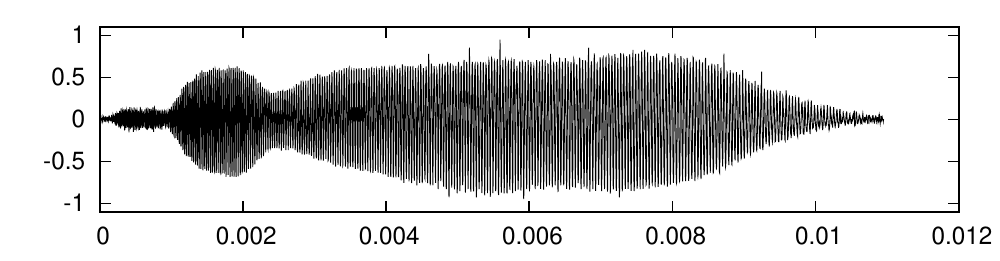} \\
\includegraphics[width=\columnwidth]{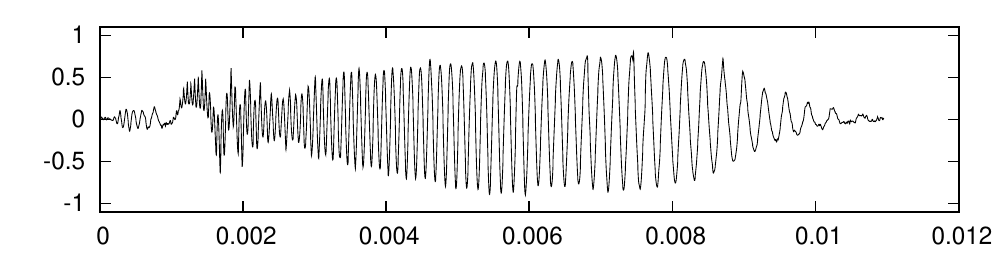}
\end{tabular}
\figcaption{bat}{
Echolocation call of Nyctalus noctula.
The time values are seconds.
}
\end{figure}
In the same way our method might be used to increase the pitch of infrasound.

\subsection{FM synthesis}
Since we can choose the phase parameter per sample,
we can not only do regular pitch shifting,
but we can also apply FM synthesis effects \cite{clowning1973fmsynthesis}.
An FM effect alone could also be achieved with synchronised time warping,
however with our method we can perform pitch shifting, time scaling
and FM synthesis in one go.
See \figref{fm-synthesis} for an example.
\begin{figure}
\begin{tabular}{l}
\includegraphics[width=\columnwidth]{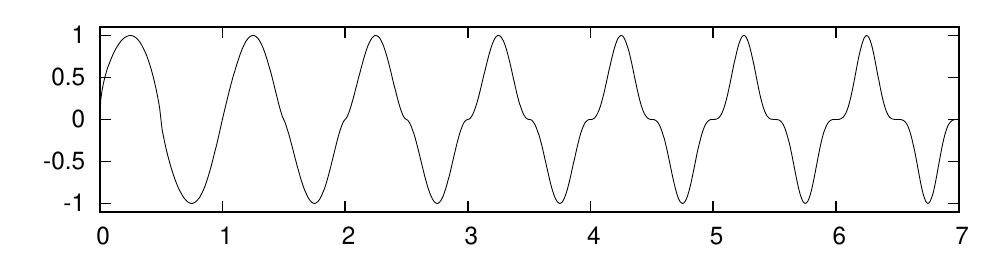} \\
\includegraphics[width=\columnwidth]{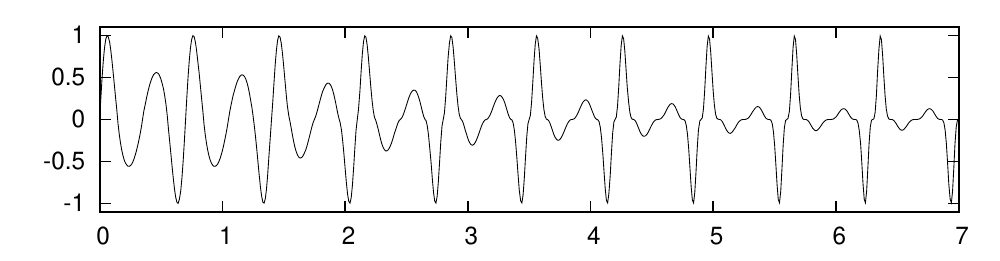}
\end{tabular}
\figcaption{fm-synthesis}{
Above is a sine wave that is distorted
by $\anonymfunc{v}{\sgnid v\cdot\abs{v}^p}$
for $p$ running from $\frac{1}{2}$ to $4$.
Below we applied our pitch shifting algorithm
in order to increase the pitch and
change the waveshape by modulating the phase
with a sine wave of the target frequency.
}
\end{figure}

\subsection{Tone generation by time stretching}
The inability to reproduce noise
can be used for creative effects.
By time stretching we can get a tone out of every sound.
This is exemplified in \figref{noisy-tone}.
If we stretch time by a factor~$n$
for a specific period~$T$
(source and target period shall be equal),
then in the spectrum the peak for each harmonic of frequency~$\frac{1}{T}$
is narrowed by a factor~$n$.

\begin{figure}
\begin{tabular}{l}
\includegraphics[width=\columnwidth]{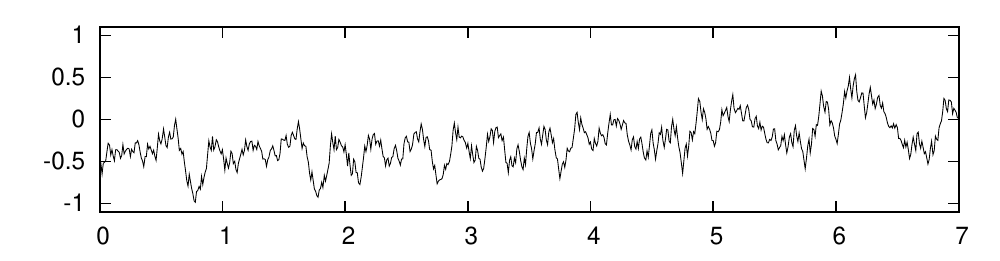}
\end{tabular}
\figcaption{noisy-tone}{
A tone generated from pink noise by time stretching.
The source and the target period are equal.
The time is stretched by factor~4.
}
\end{figure}

\section{Related work}
\seclabel{related-work}


The idea of separating parameters (here phase and shape)
that are in principle indistinguishable is not new.
For example it is used in \cite{lang2002network}
for separation of sine waves of considerably different frequencies.
This way a numerically problematic ordinary differential equation
is turned into a well-behaved partial differential equation.

Also the specific tasks of pitch shifting and time scaling
are addressed by a broad range of algorithms \cite{zoelzer2002dafx}.
Some of them are intended for application on complex music signals
and are relatively simple,
like ``Overlap and Add'' (OLA), ``Synchronous Overlap and Add'' (SOLA)
\cite{roucos1985timescale,makhoul1986timescale},
or the three-phase overlap algorithm using cosine windows
presented in \cite{disch1999modulation}.
They take segments of an audio signal as they are,
rearrange them and reduce the artifacts of the new composition.
Other methods are based on a model of the sound.
E.g. ``pitch-synchronous overlap-add'' (PSOLA)
is roughly based on the excitation+filter model for speech
\cite{hamon1989diphone,moulines1990pitchsync,lemmetty1999speechsynthesis},
sinusoidal models interpret sounds as mixture of sine waves
that are mo\-dulated in amplitude and frequency \cite{raspaud2007resampling},
even more sophisticated models
treat sounds as mix of sine waves, transients and a residual
\cite{nsabimana2008audiodecomposition}.
There are also methods specific to monophonic signals,
like wavetable synthesis \cite{massie1998wavetable}
and advanced methods, that can cope with frequency modulated input signals
\cite{haghparast2007pitchshifting}.

In the following two sections we like to compare our method
with the two methods that are most similar to the one we introduced here,
namely with wavetable synthesis and PSOLA.

\subsection{Comparison with Wavetable Synthesis}
\seclabel{comparison-wavetable}

When we chop our input signal into wave periods
and use the waves as wavetable,
then wavetable synthesis becomes rather similar to our method
\cite{massie1998wavetable}.
Wavetable synthesis also preserves waveforms, rather than formants,
it allows frequency and shape modulation at sample rate.
However, due to the treatment of waveforms as discrete objects,
the wavetable synthesis cannot cope well with non-harmonic frequencies
(\figref{sine}).
Thus, in wavetable synthesisers, phasing is usually implemented
using multiple wavetable oscillators.
A minor deficiency is,
that fractional periods of the input signal are not supported.
The wavetables always have to have an integral length.
We consider this deficiency to be not so important,
since when we do not match the wave period exactly,
this will appear to the wavetable synthesis algorithm as a shifting waveform.
But that algorithm must handle varying waveshapes anyway.

The wavetables in a wavetable synthesiser
are usually created by a more sophisticated preprocessing
than just chopping a signal into pieces of equal length.
However, for comparison purposes we will just use this simple procedure.

Chopping and subsequent wavetable synthesis can also be interpreted
as placing the sample values on a cylinder
and interpolating between them.
It yields the pattern shown in \figref{grid-ortho}.
\begin{figure}
\includegraphics[width=\columnwidth]{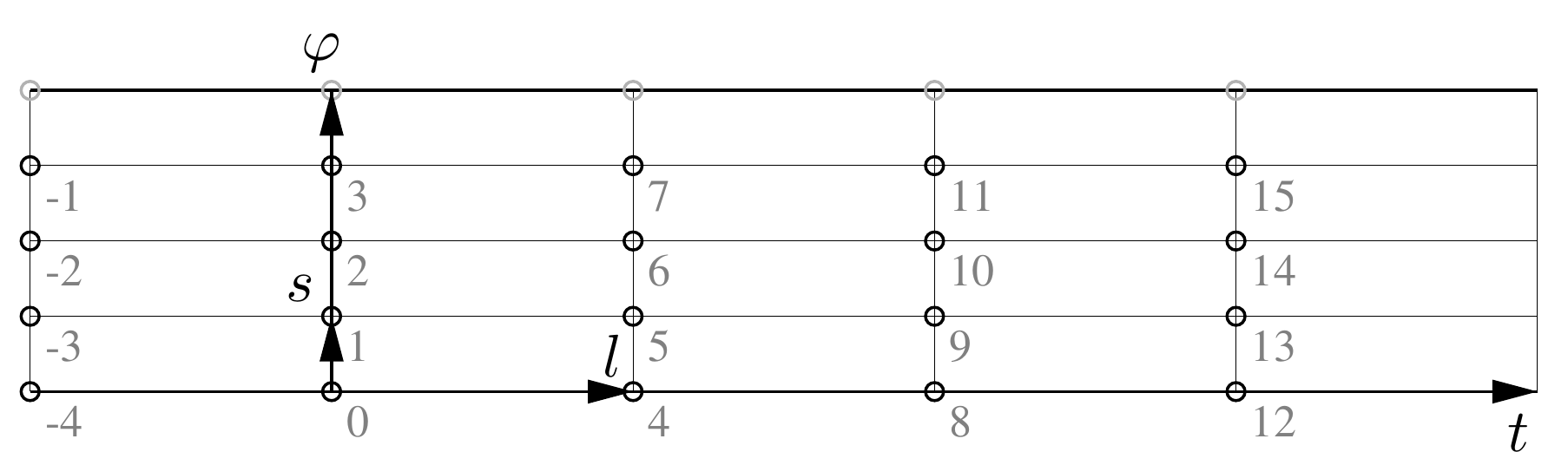}
\figcaption{grid-ortho}{
Mapping of the sampled values to the cylinder
in the wavetable-oscillator method.
The grey numbers are the time points in the input signal.
}
\end{figure}
The variable $s$ denotes the ``step'' direction,
which coincides with the direction of the phase in this scheme.
The variable $l$ denotes the ``leap'' direction,
which coincides with the time direction.
In order to fit the requirement of a wave period of 1
we shrink the discrete input signal.
Say, the discrete input signal is $u$,
the wave period is $T$, that must be integral,
and the real input signal is $x$,
that we define at some discrete fractional points
by $x(\frac{n}{T}) = u(n)$
and at the other ones by interpolation.
In \figref{grid-ortho} it is $T=4$
and for example $y(1.7,\toperiodic(0.6))$
is located in the rectangle spanned
by the time points $6, 7, 10, 11$.
For simplicity let us use linear interpolation as in
\eqnref{linear-interpolation}.
We would interpolate
\begin{multline*}
y(1.7)(\toperiodic(0.6)) = \\
 \lerp(\lerp(u(6),u(7))(0.4), \lerp(u(10),u(11))(0.4))(0.7)
.
\end{multline*}
In general for $y(t,\varphi)$ we get
\begin{eqnarray*}
\forall r\in\R\quad
   \fractional{r} &=& r - \floor{r} \\
\forall r\in\R\quad
   x(\tfrac{r}{T}) &=& \lerp(u(\floor{r}), u(\floor{r}+1))(\fractional{r}) \\
\tau &=& \floor{t} + \fromperiodic(\varphi) \\
y(t,\varphi)
   &=& \lerp(x(\tau), x(\tau + 1))(\fractional{t})
\end{eqnarray*}
or more detailed
\begin{eqnarray*}
s &=& T\cdot\fromperiodic(\varphi) \\
n &=& T\cdot\floor{t} + \floor{s} \\
a &=& \lerp(u(n),u(n+1))(\fractional{s}) \\
b &=& \lerp(u(n+T),u(n+T+1))(\fractional{s})) \\
y(t,\varphi) &=& \lerp(a, b)(\fractional{t})
.
\end{eqnarray*}
The handling of waveform boundaries points us to a problem of this method:
Also at the waveform boundaries we interpolate bet\-ween adjacent values
of the input signal $u$.
That is, we do not wrap around.
This way, waveforms can become discontinuous by interpolation.
We could as well wrap around the indices at waveform boundaries.
This would complicate the computation and raises the question,
what values should naturally be considered neighbours.
We remember,
that we also have the ambiguity of phase values in our method.
But there, the ambiguity vanishes in a subsequent step.

\subsubsection{Boundaries}

If we have an input signal of $n$ wave periods,
then we have only $n-1$ sections where we can interpolate linearly.
Letting alone that this approach cannot reconstruct a given signal,
it loses one wave at the end for linear interpolation.
If there is no integral number of waves,
than we may lose up to (but excluding) two waves.
For interpolation between $k$ nodes in time direction
we lose $k-1$ waves.
Of course, we could extrapolate, but this is generally problematic.

That is, the wavetable oscillator cuts away between one and two waves,
whereas our method always reduces the signal by two waves.
Thus the wavetable oscillator is slightly more economic.

\subsection{Comparison with PSOLA}
\seclabel{comparison-psola}

Especially for speech processing,
we would have to preserve formants rather than waveshapes.
The standard method for this application is
``(Time Domain) Pitch-Synchronous Overlap/Add'' (TD-PSOLA)
\cite{hamon1989diphone,moulines1990pitchsync}.
PSOLA decomposes a signal into wave atoms,
that are rearranged and mixed
while maintaining their time scale.
The modulation of the timbre and the pitch
can only be done at wave rate.
As for wavetable synthesis it is also true for PSOLA,
that due to the discrete handling of waveforms,
non-harmonic frequencies are not handled well.


Incidentally, time shrinking at constant pitch with our method
is similar to PSOLA of a monophonic sound.
For time shrinking with factor~$v$
and interpolating with kernel $\kappa$
our algorithm computes:
\begin{eqnarray*}
z(t)
   &=& y(v\cdot t,\toperiodic(t)) \\
   &=& \sum_{k\in\Z} x(t+k)\cdot\kappa(v\cdot t-(t+k)) \\
   &=& \sum_{k\in\Z} x(t+k)\cdot\kappa((v-1)\cdot t-k) \\
\eqntext{with $(\shrink{d}{\kappa})(t) = \kappa(d\cdot t)$}
z &=& \sum_{k\in\Z}
           (\translatel{k}{x})\cdot
           (\shrink{(v-1)}{(\translater{k}{\kappa})})
\quad.
\end{eqnarray*}
We see that the interpolation kernel $\kappa$
acts like the segment window in PSOLA,
but it is applied to different phases of the waves.
For $v=1$, only the non-translated $x$ is passed to the output.

Intuitively we can say,
that PSOLA is source oriented or push-driven,
since it dissects the input signal into segments
independent from what kind of output is requested.
Then it computes, where to put these segments in the output.
In these terms, our method is target oriented or pull-driven,
as it investigates for every output value,
where it can get the data for its construction from.

Actually, it would be easy to add another parameter to PSOLA
for time stretching the atoms.
This way one could interpolate bet\-ween shape preservation and formant preservation.

\section{Results and comparisons}
\seclabel{example}

Finally we like to show some more results of our method
and compare them with the wavetable synthesis.

In \figref{enveloped} we show,
that signals with band-limited amplitude modulation
can be perfectly reconstructed, except at the boun\-daries.
Although we do not employ \person{Whittaker} interpolation
but simple linear interpolation the result is convincing.
\begin{figure}
\begin{tabular}{l}
\includegraphics[width=\columnwidth]{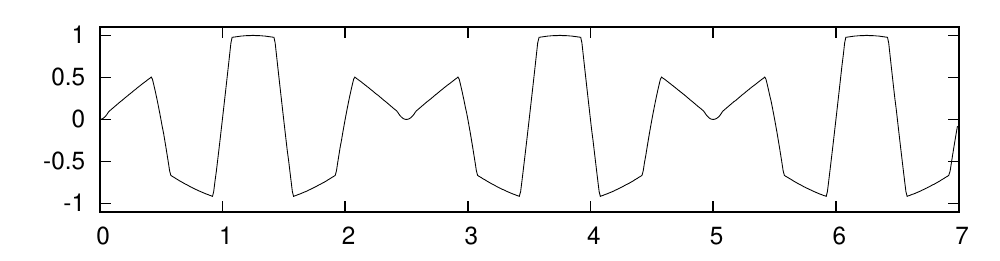} \\
\includegraphics[width=\columnwidth]{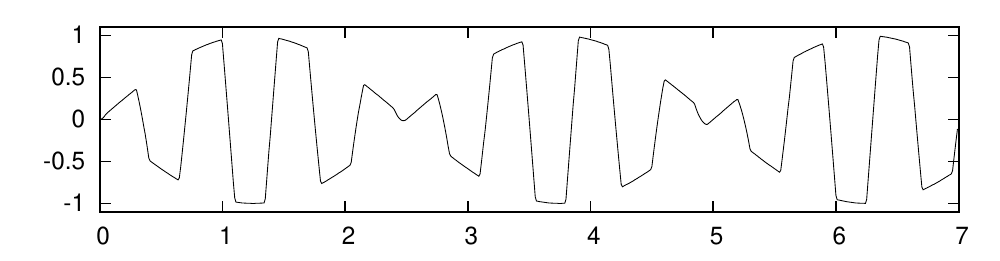} \\
\includegraphics[width=\columnwidth]{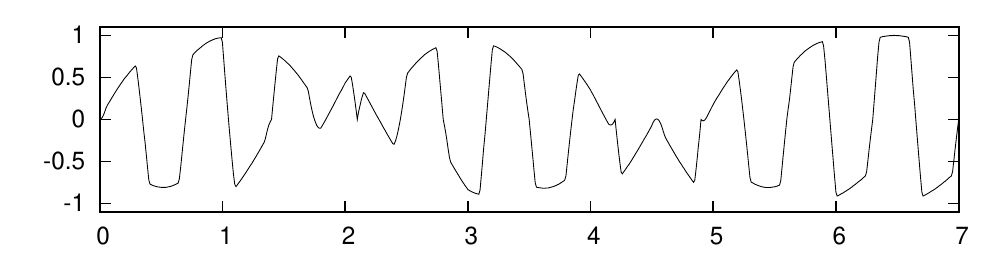} \\
\includegraphics[width=\columnwidth]{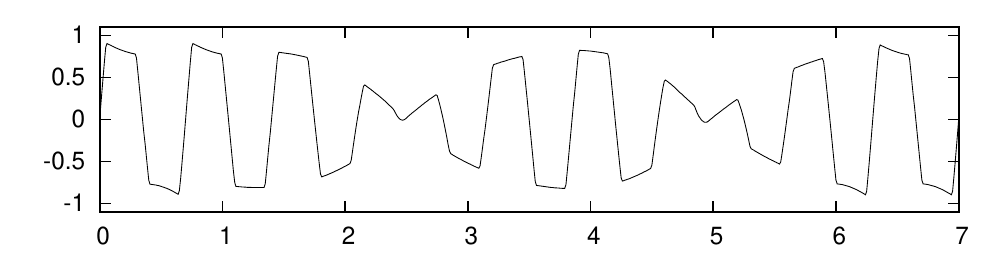}
\end{tabular}
\figcaption{enveloped}{
Pitch shifting performed on a periodically amplitude modulated tone
using linear interpolation.
The figures show from top to bottom:
The input signal,
the signal recomputed with a different pitch
(that is, the ideal result of a pitch shifter),
the result of wavetable oscillating,
the result of our method.
}
\end{figure}

In \figref{sine} we apply our method to a sine
with a frequency that is clearly distinct from $1$.
To a monophonic pitch shifter this looks like a rapidly changing waveform.
As derived for \person{Whittaker} interpolation in \eqnref{sine-preserved}
our method can at least reconstruct the sine shape,
however the frequency of the pitch shifted signal differs from the intended one.
Again, the used linear interpolation does not seem to be substantially worse.
\begin{figure}
\begin{tabular}{l}
\includegraphics[width=\columnwidth]{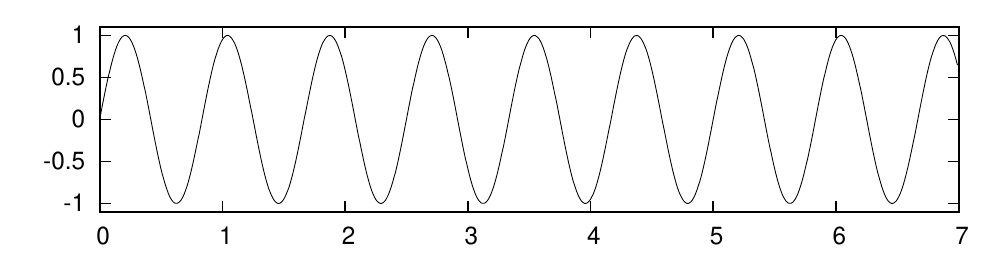} \\
\includegraphics[width=\columnwidth]{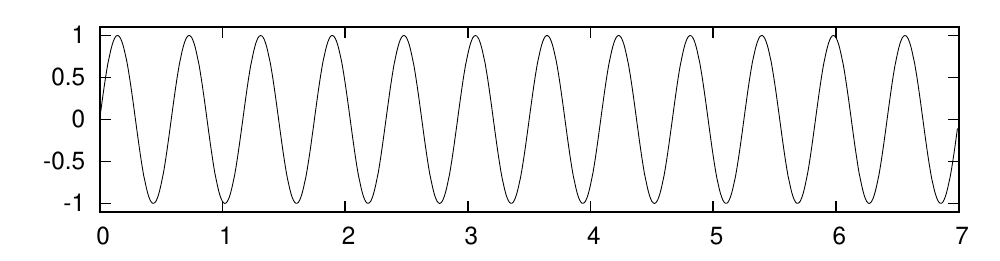} \\
\includegraphics[width=\columnwidth]{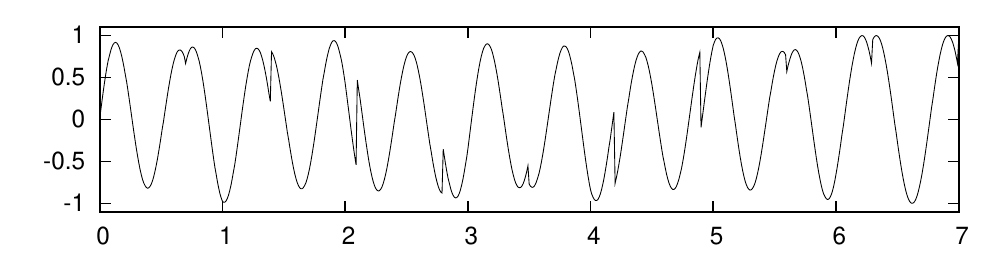} \\
\includegraphics[width=\columnwidth]{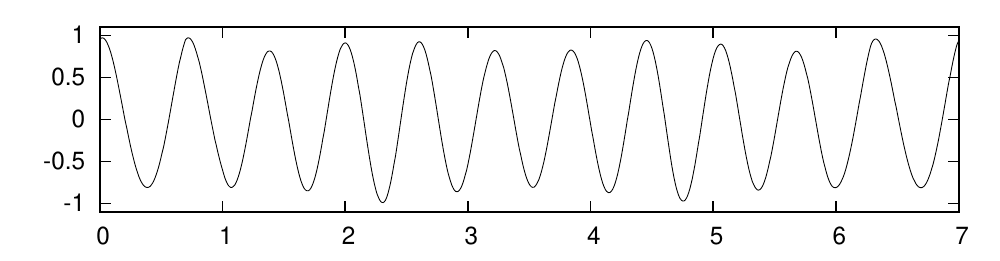}
\end{tabular}
\figcaption{sine}{
Pitch shifting performed on a sine tone with a frequency
that deviates from the required frequency $1$.
The graphs are arranged analogously to \figref{enveloped}.
}
\end{figure}

We also like to show how phase modulation at sample rate
can be used for FM synthesis combined with pitch shifting.
In \figref{fm-synthesis-power} we use
a sine wave with changing distortion as input,
whereas in \figref{fm-synthesis-sined}
the sine wave is not distorted, but detuned to frequency $1.2$,
which must be treated as changing waveform with respect to frequency $1$.
\begin{figure}
\begin{tabular}{l}
\includegraphics[width=\columnwidth]{power07} \\
\includegraphics[width=\columnwidth]{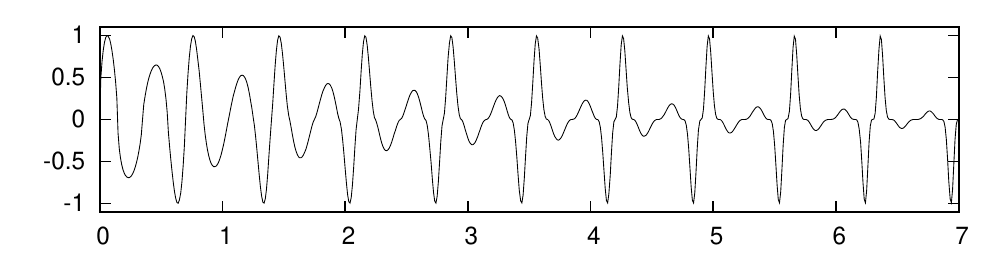} \\
\includegraphics[width=\columnwidth]{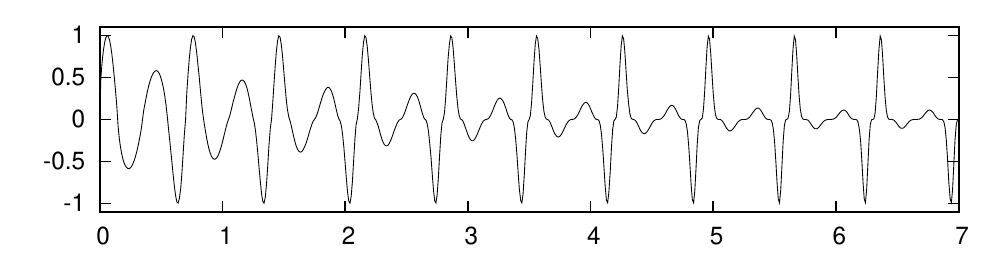} \\
\includegraphics[width=\columnwidth]{power10fm-from-helix}
\end{tabular}
\figcaption{fm-synthesis-power}{
Above is a sine wave that is distorted
by $\anonymfunc{v}{\sgnid v\cdot\abs{v}^p}$
for $p$ running from $\frac{1}{2}$ to $4$.
Below we applied our pitch shifting algorithm
in order to increase the pitch and
change the waveshape by modulating the phase
with a sine wave of the target frequency.
The graphs are arranged analogously to \figref{enveloped}.
}
\end{figure}
\begin{figure}
\begin{tabular}{l}
\includegraphics[width=\columnwidth]{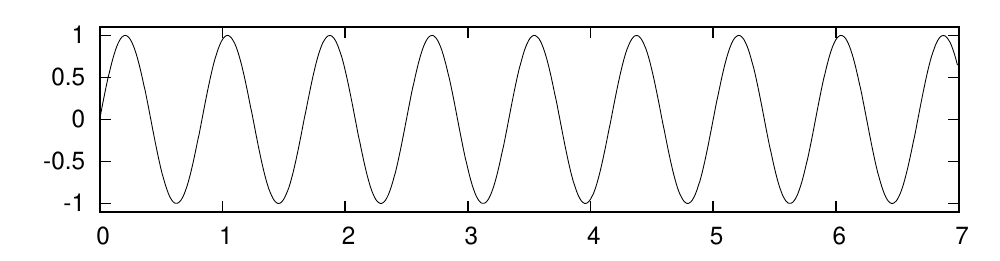} \\
\includegraphics[width=\columnwidth]{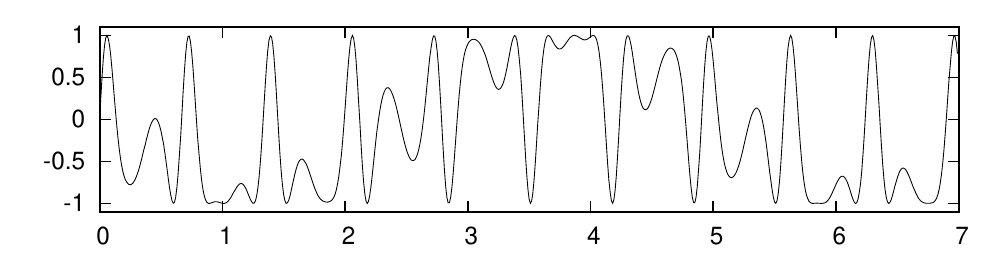} \\
\includegraphics[width=\columnwidth]{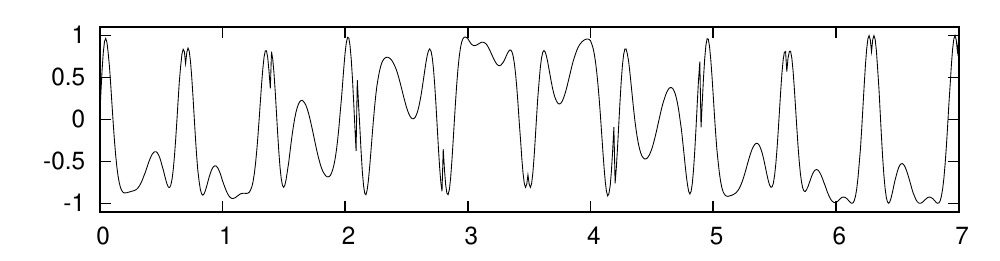} \\
\includegraphics[width=\columnwidth]{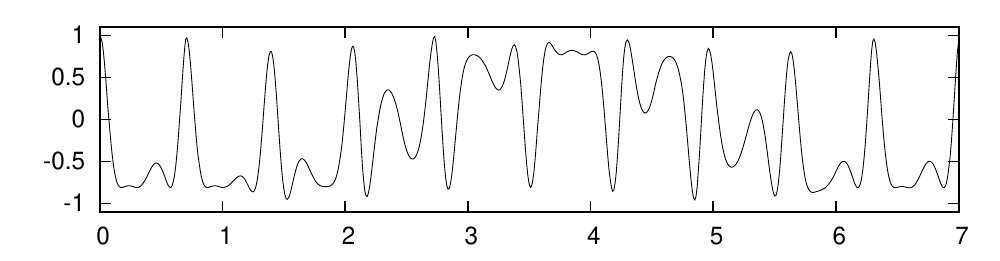}
\end{tabular}
\figcaption{fm-synthesis-sined}{
Here we demonstrate FM synthesis
where the carrier sine wave is detuned.
The graphs are arranged analogously to \figref{enveloped}.
}
\end{figure}

As a kind of counterexample we demonstrate in \figref{percussive},
how the boundary handling forces our method
to limit the time parameter to values above 1
and thus it cannot reproduce the beginning of the sound properly.
\begin{figure}
\begin{tabular}{l}
\includegraphics[width=\columnwidth]{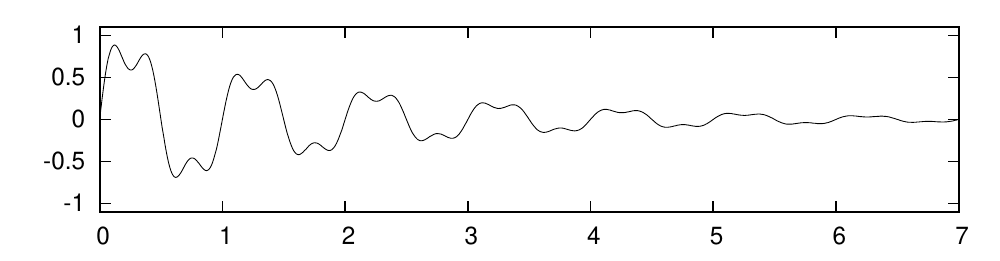} \\
\includegraphics[width=\columnwidth]{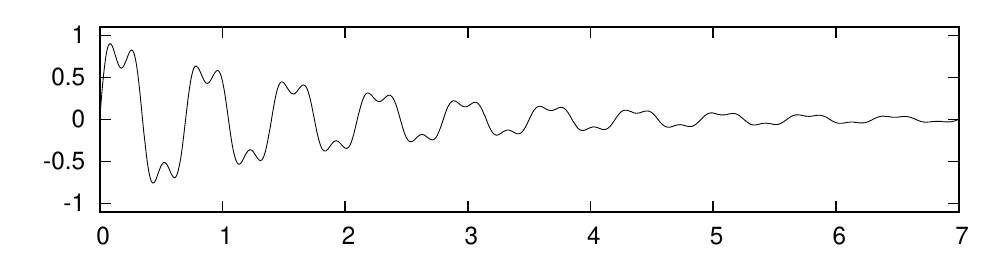} \\
\includegraphics[width=\columnwidth]{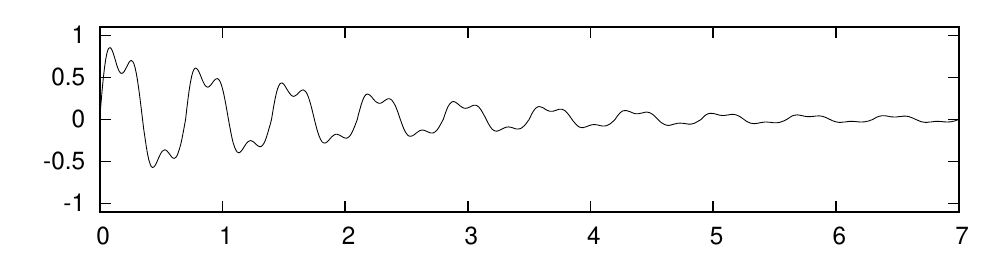} \\
\includegraphics[width=\columnwidth]{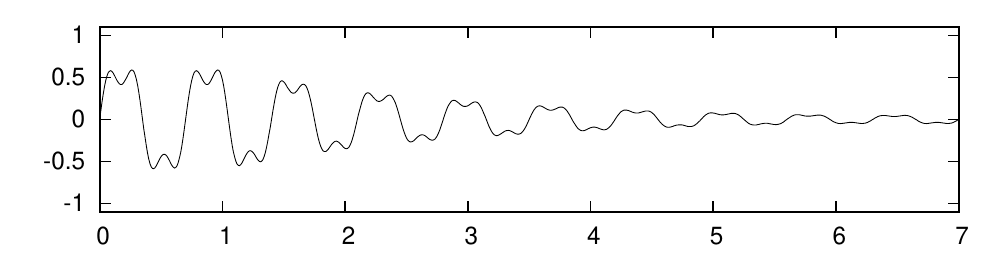}
\end{tabular}
\figcaption{percussive}{
Pitch shifting performed on a percussive tone.
The graphs are arranged analogously to \figref{enveloped}.
}
\end{figure}
For completeness we also present the same sound
transposed by PSOLA in \figref{percussive-psola}.
\begin{figure}
\includegraphics[width=\columnwidth]{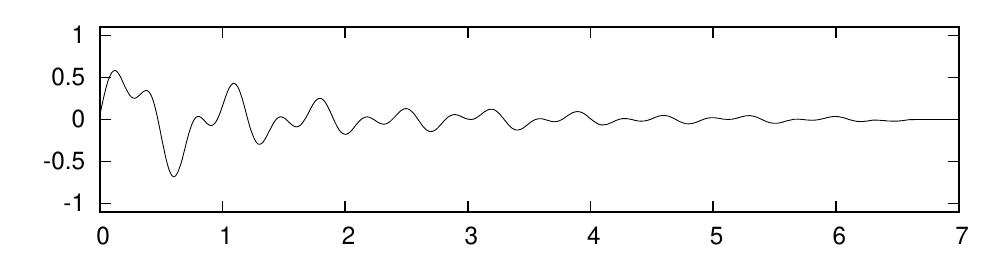} \\
\figcaption{percussive-psola}{
Pitch shifting with the tone from \figref{percussive}
that preserves formants performed by PSOLA.
}
\end{figure}

Please note that the examples have a small number of periods (7 to 10)
compared to signals of real instruments (say, 200 to 2000 per second).
On the one hand, graphs of real world sounds
would not fit on the pages of this journal at a reasonable resolution.
On the other hand, only for those small numbers of periods
we get a visible difference between the methods we compare here.
However, if you are going to implement a single tone pitch shifter from scratch
you might prefer our method, because it handles the corner cases better
and the complexity is comparable to that of the wavetable oscillator.
Also for theoretical considerations we recommend our method
since it exposes the nice properties presented in \secref{cont-signal}.

\subsection{Conclusions}

We shall note that despite the differences
between our method and existing ones,
many of the properties discussed in \secref{properties}
hold approximately also for the existing methods.
Thus the worth of our work is certainly
to contribute a model where these properties apply exactly.
This should serve a good foundation
for further development of a sound theory
of pitch shifting and time scaling.
It also pays off, when it comes to corner cases,
like FM synthesis as extreme pitch shifting.

\section{Outlook}
\seclabel{outlook}

\subsection{Band Limitation}
In our paper we have omitted how to avoid aliasing effects
in pitch shifting caused by too high harmonics in the waveforms.
In some way we have to band-limit the waveforms.
Again, we should do this without actually constructing
the two-dimensional cylindrical function.
When we use interpolation that does not extend the frequency band,
that is imposed by the discrete input signal,
then it should be fine to lowpass filter the input signal
before converting to the cylinder.
The cut-off frequency must be dynamically adapted to the frequency modulation
used on conversion from the cylinder to the audio signal.

\subsection{Irregular Interpolation}
We could also handle input of varying pitch.
We would then need a function of time describing the frequency modulation
which is used to place the signal nodes at the cylinder.
This would be an irregular pattern
and renders the whole theory of \secref{cont-signal-theory} useless.
We had to choose a generalised 2D interpolation scheme.

\section{Acknowledgments}

I like to thank Alexander Hinneburg for
fruitful discussions and creative suggestions.
I also like to acknowledge Sylvain Marchand and Martin Raspaud
for their comments on my idea and their encouragement.
Finally I am grateful to Stuart Parsons,
who kindly permitted usage of his bat recordings in this paper.

\bibliographystyle{IEEEbib}
\bibliography{audio,thielemann,haskell,literature,wavelet}

\appendix

\section{Automated proofs with PVS}
\seclabel{pvs-proofs}

The goal of proof assistants is currently not to simplify proving,
but to get confidence that a claim is true.
Actually, you will succeed with a proof
only with a profound understanding of the problem
and preferably several proof ideas,
of which only one can be enough formalised
such that the proof assistant accepts it.

To give an impression of automated proving,
we show the derivation of time-invariant interpolations
from \secref{time-invariance}
expressed by two lemmas in PVS \cite{owre2001pvs}
in \figref{pvs-time-invariant-ip}.
See \url{http://darcs.haskell.org/synthesizer/src/Synthesizer/Plain/ToneModulation/}
for the according modules.
\begin{figure}
\begin{verbatim}
Displacement: TYPE = real
Time: TYPE = real
Phase: TYPE =
  Quotient(LAMBDA (p0, p1):
             integer?(p1 - p0))
Signal: TYPE = [Time -> Displacement]
Waveform: TYPE = [Phase -> Displacement]
Tube: TYPE = [Time -> Waveform]

t: VAR Time
x: VAR Signal
F: VAR [Signal -> Tube]
I: VAR [Signal -> Waveform]

IS(I)(x)(t): Waveform =
  rotate_right(t)(I(translate_left(t)(x)))

time_invariant?(F): bool =
  FORALL x, t:
    F(translate_right(t)(x)) =
       translate2(t, t)(F(x))

interpolation_time_invariant: LEMMA
  time_invariant?(IS(I))

interpolation_slice: LEMMA
  time_invariant?(F) =>
    (EXISTS I: F = IS(I))
\end{verbatim}
\figcaption{pvs-time-invariant-ip}{
Excerpt from a PVS module containing two statements:
The first claim is that the interpolation of the form
given in \eqnref{time-invariant-ip} is time-invariant
in the sense of \dfnref{time-invariance}.
The second claim is that all time-invariant interpolations
can be expressed in that form.
In contrast to the PVS language,
the according proof script can only be understood
when interactively running it step by step in PVS
and looking at how the expressions evolve.
}
\end{figure}
The lemma, that constant interpolation preserves static waves
is shown in \figref{pvs-static-wave}.
See \secref{static-wave-preservation} for details.
\begin{figure}
\begin{verbatim}
w: VAR Waveform

constant_tube?(y): bool =
  FORALL t0, t1: y(t0) = y(t1)

interpolation_constant: LEMMA
  FORALL w: constant_tube?
    (IS(LAMBDA x: x o cinv)(w o c))
\end{verbatim}
\figcaption{pvs-static-wave}{
PVS lemma that claims
that the constant interpolation preserves static waves.
}
\end{figure}

\end{document}